\renewcommand\paragraph[1]{%
  \par\vspace{1ex}%
  \noindent\textbf{#1}\quad
}
\crefname{equation}{Eq.}{Eqs.}
\Crefname{equation}{Eq.}{Eqs.}
\newcommand{\wG}{\widehat{G}}
\newcommand{\ep}{\epsilon}
\newcommand{\la}{\lambda}
\newcommand{\om}{\omega}
\newcommand{\CC}{{\mathbb C}}
\newcommand{\NN}{{\mathbb N}}
\newcommand{\ZZ}{{\mathbb Z}}
\newcommand{\cH}{{\mathcal H}}
\newcommand{\ket}[1]{|#1\rangle}
\newcommand{\bra}[1]{\langle#1|} 
\newcommand{\tr}{\operatorname{tr}}
\newcommand{\e}{\mathrm e}
\newtheorem{thm}{Theorem}
\newtheorem{defi}{Definition}
\newtheorem{fact}{Fact}
\newtheorem{prop}{Proposition}
\newtheorem{lemma}{Lemma}
\theoremstyle{definition}
\begin{document}

\title{The abelian state hidden subgroup problem: 
\\ Learning stabilizer groups and beyond}

\author{Marcel Hinsche}\email{m.hinsche@fu-berlin.de}

\affiliation{Dahlem Center for Complex Quantum Systems, Freie Universit\"{a}t Berlin, 14195 Berlin, Germany}

\author{Jens Eisert}
\affiliation{Dahlem Center for Complex Quantum Systems, Freie Universit\"{a}t Berlin, 14195 Berlin, Germany}

\author{Jose Carrasco}\email{jose.carrasco@fu-berlin.de}

\affiliation{Dahlem Center for Complex Quantum Systems, Freie Universit\"{a}t Berlin, 14195 Berlin, Germany}

\begin{abstract}
Identifying the symmetry properties of quantum states is a central theme in quantum information theory and quantum many-body physics. In this work, we investigate quantum learning problems in which the goal is to identify a hidden symmetry of an unknown quantum state. Building on the recent formulation of the state hidden subgroup problem (StateHSP), we focus on abelian groups and develop an efficient quantum algorithm that learns any hidden symmetry subgroup using a generalized form of Fourier sampling. We showcase the versatility of the approach in three concrete applications: These are learning (i) qubit and qudit stabilizer groups, (ii) cuts along which a state is unentangled, and (iii) hidden translation symmetries. Through these applications, we reveal that well-known quantum learning primitives, such as Bell sampling and Bell difference sampling, are, in fact, special cases of Fourier sampling. Our results highlight the broad potential of the StateHSP framework for symmetry-based quantum learning tasks and provide protocols that are easier to implement on near-term quantum devices.
\end{abstract}

\maketitle

\vspace{-0.5cm}
\section{Introduction}

Symmetry lies at the heart of modern physics, shaping everything from the conservation laws that govern classical and quantum mechanics to the rich classification of phases of matter in quantum many-body systems. At its core, symmetry captures what remains invariant under specific transformations---typically described by mathematical groups---providing a unifying framework to study the fundamental properties of complex quantum systems.

In the rapidly evolving field of quantum information science, symmetries continue to play a particularly prominent role. They inform the design and analysis of diverse quantum protocols
offering both conceptual clarity and practical advantages. In quantum simulation, symmetry can dramatically reduce resource overhead and enhance accuracy. Symmetries are likewise crucial for quantum error correction \cite{terhalQuantumErrorCorrection2015a} and mitigation strategies, particularly those that rely on entangled copies of quantum states or exploit invariant subspaces \cite{caiQuantumErrorMitigation2023}, as well as for the robust and scalable benchmarking and certification of quantum devices \cite{eisertQuantumCertificationBenchmarking2020a, klieschTheoryQuantumSystem2021a}. In quantum many-body and condensed matter physics, symmetries are essential for understanding conventional phases and exploring exotic topological and symmetry‑protected order \cite{zengQuantumInformationMeets2019a}.

Furthermore, symmetries are a driving force behind many of the most powerful quantum algorithms. For instance, group-theoretic structures underpin the celebrated \emph{quantum Fourier transform} (QFT), a key component in algorithms such as Shor’s for factoring large integers
\cite{shorAlgorithmsQuantumComputation1994a, childsLectureNotesQuantum2025} and other hidden subgroup problems. By systematically exploiting group symmetries, researchers are able to craft algorithms that outperform their classical counterparts.

At the same time, quantum property testing \cite{montanaroSurveyQuantumProperty2016c} and quantum learning theory \cite{arunachalamGuestColumnSurvey2017b,anshuSurveyComplexityLearning2024} have emerged as central frameworks for understanding the capabilities and limitations of extracting information from quantum systems. The construction of learning and testing algorithms---as well as the proof of their optimality---often relies on the structural constraints introduced by symmetry, which enable the design of more efficient procedures that exploit invariance to reduce complexity.

\subsection{State hidden subgroup problem}

This ubiquity of symmetries in quantum states naturally raises the question: can we \textit{learn} an unknown symmetry group directly from copies of a quantum state? In this vein, 
Ref.\ \cite{boulandStateHiddenSubgroup2025} has recently introduced the \emph{state hidden
subgroup problem }(StateHSP) as the task of identifying a hidden symmetry subgroup of some parent group $G$ leaving a quantum state invariant.

\begin{defi}[State hidden subgroup problem (StateHSP)]
Let $G$ be a finite group with a unitary representation $R:G\to \mathrm{U}(\mathcal{H})$ acting on the Hilbert space $\mathcal{H}$
and let $H\leq G$ be a subgroup of $G$. Assume that you have access
to copies of an unknown quantum state vector $\ket{\psi}\in\mathcal{H}$
that is promised to have the following properties:
\begin{enumerate}
\item $\forall\,h\in H,\quad R\left(h\right)\ket{\psi}=\ket{\psi}$.
\item $\forall g\,\not\in H,\quad\left|\bra{\psi}R(g)\ket{\psi}\right|\leq1-\epsilon$.
\end{enumerate}
The problem is to identify $H$.
\end{defi}

The promise ensures that exact symmetries leave $\ket{\psi}$ invariant while any other group element perturbs it by at least $\epsilon$.
As a concrete application of the StateHSP framework, 
Ref.\ \cite{boulandStateHiddenSubgroup2025} has given an efficient algorithm for the \textit{hidden cut problem}---a generalization of entanglement testing. This problem asks to identify the cuts across which a given state
takes the form of a product state. Their algorithm extends to an efficient algorithm for the StateHSP over arbitrary finite abelian groups using $\mathrm{poly}(\log |G|, 1/\epsilon)$ copies of the state. The result demonstrates the framework's power for quantum learning tasks.

The StateHSP formulation builds a bridge between quantum learning theory and traditional quantum algorithms. It generalizes the standard \emph{hidden subgroup problem}  (HSP)~\cite{childsLectureNotesQuantum2025} by hiding subgroup structure in the invariance properties of an unknown quantum state rather than in a classical oracle. In particular, in the extreme case of $\epsilon=1$, any $g\notin H$ yields an orthogonal state, recovering the familiar coset‑state formulation of the standard HSP. However, unlike the HSP, the StateHSP admits no classical analogue, since its input is inherently quantum.

Despite these parallels, the new lens offered by the StateHSP formulation remains largely unexplored: Can other symmetry-related learning problems be formulated as instances of the StateHSP? What is the optimal way to solve the StateHSP and how does it depend on the parameter $\epsilon$?

\subsection{Main results}
\begin{figure*}[t]
    \includegraphics[width=.98\textwidth]{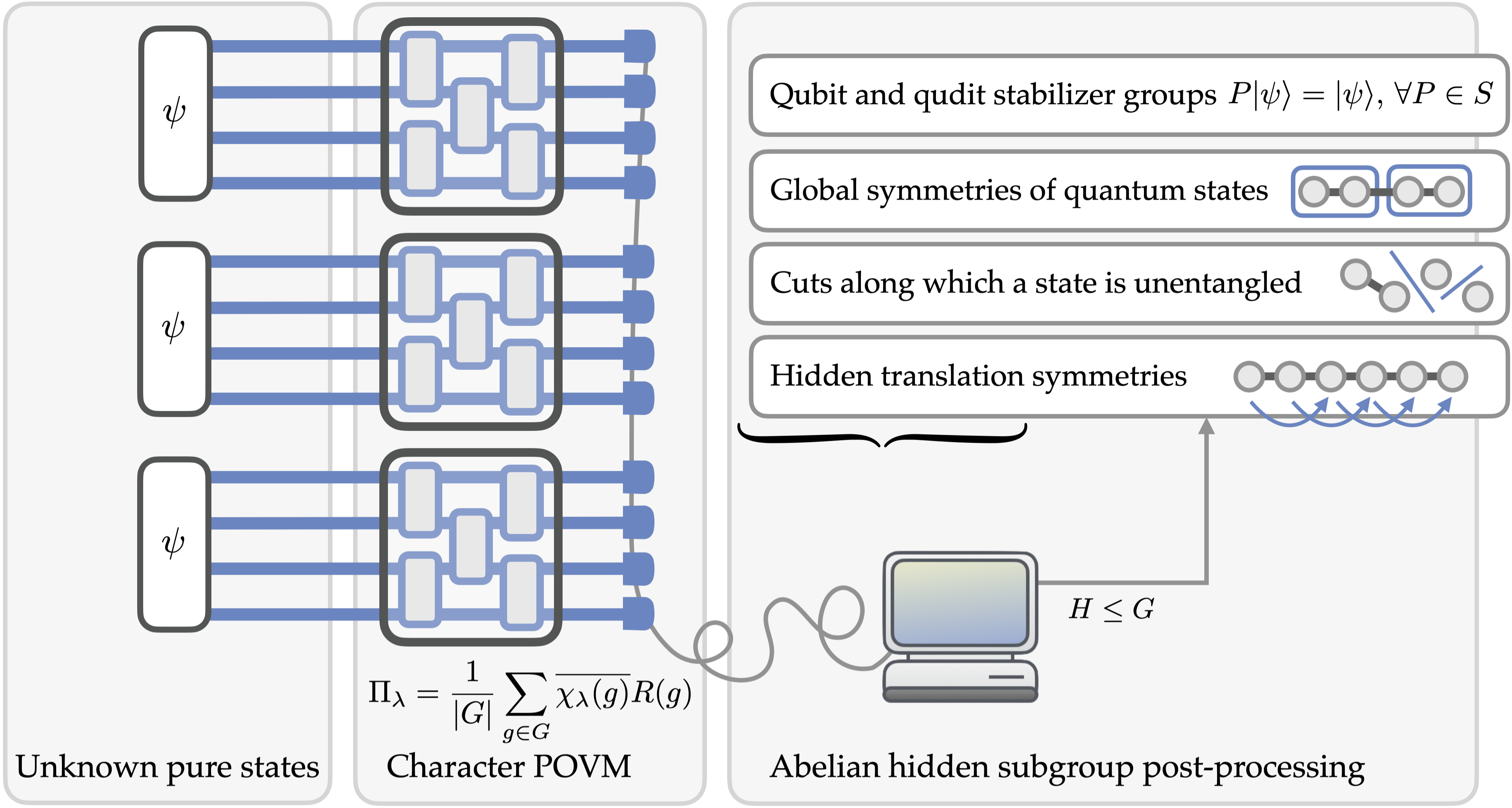}
     \caption{Overview of our work on the abelian StateHSP. We perform Fourier sampling by measuring the character POVM on $m=O(\log |G| /\epsilon)$ copies of the unknown input state vector $\psi=\ket{\psi}\bra{\psi}$, followed by abelian hidden subgroup problem post-processing to recover the hidden subgroup $H\leq G$. The character POVM is efficiently implementable and corresponds to well-known quantum learning measurement routines.}
\label{fig:overview}
\end{figure*}
In this work, we focus on the \textit{abelian} StateHSP and make three main technical and conceptual contributions (see \cref{fig:overview}):
\begin{enumerate}
\item We show that \textit{Fourier sampling}, a standard quantum algorithmic technique for the HSP, enables an efficient solution to any abelian StateHSP using only $O(\log|G|/\epsilon)$ copies. Our analysis is straightforward and significantly simpler than that in 
Ref.\ \cite{boulandStateHiddenSubgroup2025}. As such, it is expected to be more amenable to experimental implementation.

\item We demonstrate applications of the abelian StateHSP framework to new and existing problems—including learning stabilizer groups for qubits and qudits, the hidden cut problem, and learning translational symmetries.
\item We show that for certain groups and representations, Fourier sampling can be implemented without generalized phase estimation~\cite{harrowApplicationsCoherentClassical2005}. Besides enabling more efficient implementations -- and hence contributing to the quest of finding simpler quantum algorithms that are more
feasible to be implemented \cite{MindTheGaps}
-- this observation reveals that established measurement routines in the quantum learning literature, such as \textit{Bell sampling}~\cite{montanaroLearningStabilizerStates2017,huangInformationTheoreticBoundsQuantum2021c,hangleiterBellSamplingQuantum2024} and \textit{Bell difference sampling}~\cite{grossSchurWeylDuality2021,grewalEfficientLearningQuantum2024,chenStabilizerBootstrappingRecipe2025}, are in fact special cases of Fourier sampling.
\end{enumerate}
We now elaborate on each of these contributions in more detail.

\paragraph{Solving the abelian StateHSP via Fourier sampling.}
The key technique for solving the standard abelian HSP is \textit{Fourier sampling} \cite{childsWeakFourierSchurSampling2007,childsQuantumAlgorithmsAlgebraic2010,childsLectureNotesQuantum2025}, which uses the regular representation of $G$ to extract information about the hidden subgroup.
In the StateHSP setting, which involves an arbitrary representation $R$ of $G$, a natural generalization of Fourier sampling can be applied, as observed in Ref.\ \cite{boulandStateHiddenSubgroup2025}. Specifically, Fourier sampling with respect to $R$ corresponds to measuring the positive operator valued measure (POVM) that we call \textit{character POVM}, whose outcomes $\lambda$ correspond to characters $\chi_\la$ of the abelian group $G$. Measuring
\begin{equation}\label{eq:POVM}
\Pi_{\lambda}=\frac{1}{\left|G\right|}\sum_{g\in G}\overline{\chi_\la(g)}R(g) \,
\end{equation}
on a copy of the unknown state vector $\ket{\psi}$ yields the output distribution $q_\psi (\la) =\tr(\Pi_{\la}\ket{\psi}\bra{\psi})$,
which is supported only on $H^\perp$, the dual of the hidden subgroup $H$. Hence, repeated sampling from $q_\psi$ gradually reveals $H$.

The key technical question---also studied in 
Ref.\ \cite{boulandStateHiddenSubgroup2025} for the hidden cut problem---is how the parameter $\epsilon$, absent in the standard HSP, affects sample complexity.
For the hidden cut problem, \cite{boulandStateHiddenSubgroup2025} relied on measuring multiple copies of $\ket{\psi}$ simultaneously to amplify the orthogonality condition $\left|\bra{\psi}R(g)\ket{\psi}\right|\leq 1-\epsilon$ to $\left|\bra{\psi}^{\otimes t}R(g)^{\otimes t}\ket{\psi}^{\otimes t}\right|\leq (1-\epsilon)^t$ for $g \notin H$, where $t$ is the number of coherently accessed copies. They combined this idea with an adaptive variant of Fourier sampling and a technically involved analysis.

In contrast, we show that such amplification and adaptivity are unnecessary. For any abelian StateHSP, non-adaptively sampling $\lambda \sim q_\psi$ a total of $O(\log|G|/\epsilon)$ times and applying standard abelian HSP post-processing (see
Ref.~\cite{childsLectureNotesQuantum2025}), one can recover $H$ efficiently in polynomial time.
This is proved via a straightforward anti-concentration argument.

\begin{thm}[Efficient algorithm for the abelian StateHSP]\label{thm:informal-sample-complexity}
For any finite abelian group $G$ and $\epsilon>0$, there is a polynomial‑time quantum algorithm that, with $O(\log|G|/\epsilon)$ copies of the unknown state vector $\ket\psi$, identifies the hidden subgroup $H\leq G$.
\end{thm}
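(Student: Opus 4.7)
The plan is to analyze the outcome distribution $q_\psi$ of the character POVM, show that $O(\log|G|/\ep)$ samples suffice to generate the dual subgroup $H^\perp\leq\wG$, and then invoke classical abelian HSP post-processing to recover $H=(H^\perp)^\perp$. First, since $G$ is abelian, the operators $\{R(g)\}_{g\in G}$ commute and admit a simultaneous eigendecomposition indexed by characters, $R(g)=\sum_\la\chi_\la(g)P_\la$. A direct application of character orthogonality shows that $\Pi_\la=P_\la$, so the character POVM is in fact the projective measurement onto these common eigenspaces. The promise $R(h)\ket{\psi}=\ket{\psi}$ for $h\in H$ then forces $P_\la\ket{\psi}=0$ whenever $\la\notin H^\perp$, so $q_\psi$ is supported on $H^\perp$.

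The main technical step is to control the probability that a sample lands inside a proper subgroup $K<H^\perp$. I would dualize: setting $L=K^\perp\leq G$, the same character orthogonality trick gives
\begin{equation*}
P_K=\sum_{\la\in K}P_\la=\frac{1}{|L|}\sum_{g\in L}R(g),
\end{equation*}
and since $K<H^\perp$ strictly, $L$ strictly contains $H$. Therefore
\begin{equation*}
\Pr_{\la\sim q_\psi}[\la\in K]=\frac{1}{|L|}\sum_{g\in L}\bra{\psi}R(g)\ket{\psi}.
\end{equation*}
Splitting the sum into $g\in H$ (each term equal to $1$) and $g\in L\setminus H$ (each term bounded in modulus by $1-\ep$, with the truncated sum being real since $L$ is closed under inverses), and using $[L:H]\geq 2$, one obtains $\Pr[\la\in K]\leq 1-\ep/2$. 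This invocation of the promise is the crucial and only non-routine step; I expect it to be the main obstacle in the proof.

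Given this single-subgroup bound, a union bound over the (at most $|H^\perp|\leq|G|$) maximal proper subgroups of $H^\perp$ gives $\Pr[\la_1,\dots,\la_k\text{ do not generate }H^\perp]\leq|G|(1-\ep/2)^k$, so $k=O(\log|G|/\ep)$ samples suffice for constant success probability. The standard classical abelian HSP post-processing (e.g., Smith normal form on the relation lattice, as in Ref.\ \cite{childsLectureNotesQuantum2025}) then computes $H=(H^\perp)^\perp$ in time $\mathrm{poly}(\log|G|)$. Finally, to realize the POVM in polynomial time, one decomposes $G$ as a direct product of cyclic factors with $O(\log|G|)$ generators $g_1,\ldots,g_m$ and performs a joint measurement of the commuting unitaries $R(g_i)$ via phase estimation, exploiting the fact that each $R(g_i)$ has eigenvalues of known finite order. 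Combining the sample bound, the implementation, and the classical post-processing yields the claimed polynomial-time quantum algorithm.
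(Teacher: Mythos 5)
Your proposal is correct and follows the paper's approach in all essential respects: you identify the character POVM with the projective measurement onto the common eigenspaces of the commuting $R(g)$, deduce that $q_\psi$ is supported on $H^\perp$, and prove exactly the paper's anti-concentration lemma (Lemma~\ref{lem:anti-concentration}) by dualizing $K\mapsto L=K^\perp\supsetneq H$, splitting the sum $\frac{1}{|L|}\sum_{g\in L}\bra{\psi}R(g)\ket{\psi}$ into the $H$-part and the rest, and invoking Lagrange's theorem to get the bound $1-\ep/2$. You are also right that this is the one non-routine step. The only place you diverge is in converting anti-concentration into a sample bound: you take a union bound over the maximal proper subgroups of $H^\perp$ (of which there are at most $|H^\perp|\le|G|$, so this is sound), whereas the paper uses a separate general lemma (Lemma~\ref{lem:probability-mass-of-span}) showing that $O((\log|G|+\log(1/\delta))/\ep)$ i.i.d.\ samples from any distribution on an abelian group generate a subgroup carrying all but an $\ep$-fraction of the mass, and then applies anti-concentration once to conclude the generated subgroup must be all of $H^\perp$. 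Both routes give the same $O(\log|G|/\ep)$ scaling; your union bound is slightly more direct, while the paper's lemma is a reusable statement about arbitrary distributions that does not reference the subgroup lattice. Your implementation sketch (phase estimation on $O(\log|G|)$ commuting generators) is a legitimate realization of the POVM and is essentially the quantum Fourier sampling circuit the paper describes in its Section on that topic.
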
 

\paragraph{Implementing Fourier sampling.}
The standard technique for implementing Fourier sampling is \emph{generalized phase estimation} \cite[Chapter 8]{harrowApplicationsCoherentClassical2005}, which uses an auxiliary register  and controlled application of $R(g)$ together with the quantum group Fourier transform (see \cref{ssec:quantum-fourier-sampling}). While broadly applicable to all finite abelian
groups, this technique requires additional qubits and complex controlled operations, posing practical challenges for near-term devices.

An alternative strategy, conceptually simpler but potentially less general, is to directly measure in a joint eigenbasis of the representation $\{R(g)\}_{g \in G}$. Whether this approach is feasible depends on both the group $G$ and the representation $R$, and in general, it is unclear how to implement such a measurement. Nevertheless, we show that in several applications this approach is natural and easily realized. Crucially, this viewpoint reveals that quantum learning primitives like \textit{Bell sampling}~\cite{montanaroLearningStabilizerStates2017,huangInformationTheoreticBoundsQuantum2021c,hangleiterBellSamplingQuantum2024}, \textit{Bell difference sampling}~\cite{grossSchurWeylDuality2021,grewalEfficientLearningQuantum2024,chenStabilizerBootstrappingRecipe2025}, and \textit{computational difference sampling}~\cite{grewalEfficientLearningQuantum2024,hinscheSingleCopyStabilizerTesting2025} are, in fact, special cases of Fourier sampling, providing a unifying framework and enabling more efficient implementations.

\paragraph{Learning stabilizer groups.}
Learning stabilizer states is a well-studied problem in the qubit
case: The first algorithm to identify an unknown stabilizer state
has been given by Aaronson and Gottesman in Ref.\ \cite{aaronsonIdentifyingStabilizerStates2008}
featuring a sample complexity of $O\left(n^{2}\right)$. Following up on this work,
Montanaro\ \cite{montanaroLearningStabilizerStates2017} has proposed
an algorithm based on Bell sampling that achieves the optimal $O\left(n\right)$
sample complexity. Further studies \cite{grewalEfficientLearningQuantum2024,hangleiterBellSamplingQuantum2024,chiaEfficientLearning$t$doped2024} 
have demonstrated how to learn the stabilizer group of an unknown input
state that is not necessarily a stabilizer state but possesses a
non-trivial stabilizer group. The key measurement routine used in this body of work is that of \emph{Bell difference sampling}, first introduced in Ref.\  \cite{grossSchurWeylDuality2021}.
More recent work \cite{grewalImprovedStabilizerEstimation2024d,chenStabilizerBootstrappingRecipe2025}
has applied Bell difference sampling to \emph{agnostic learning} stabilizer
states, 
where the goal is to output the closest stabilizer state without
any promise on the unknown state.

In contrast, stabilizer learning for qudits (with local dimension
$d>2$) remains underexplored. The sole work addressing qudits explicitly
\cite{allcockBellSamplingStabilizer2024} shows that the key subroutine
for qubits---Bell difference sampling---fails to provide useful information in the qudit 
case.

In this work, we demonstrate that learning stabilizer groups can be formulated as an instance of the abelian StateHSP framework. The key trick is to focus on Pauli group up to phases which is isomorphic to $G =\mathbb{Z}_{d}^{2n}$ where $d$ is the local dimension of the qudits.  Formally, we define the following problem: Let $\mathcal{P}_{n}$ denote the generalized $n$-qudit Pauli group. Then 
a stabilizer group $S$ is an
abelian subgroup of $\mathcal{P}_{n}$ that does not contain non-trivial multiples of the identity operator. Denote by $\mathrm{P}_n:=\mathcal{P}_{n} / \langle \omega I\rangle $ and $\mathrm{S} := S/ \langle \omega I \rangle$ the corresponding \textit{phaseless} or \textit{projective} Pauli group and its induced stabilizer subgroup, where $\omega=e^{2\pi i/d}$.

\begin{defi}[Hidden stabilizer group problem]\label{def:hidden-stabilizer-group}
 Let $\ket{\psi}$ be a state 
 vector on $n$ qudits and suppose that there is a non-trivial stabilizer group $S\subseteq \mathcal{P}_n$ such that
\begin{equation}
P\ket{\psi} = \ket \psi, \quad \forall\, P \in S\,,
\end{equation}
while for all $P \in \mathrm{P}_{n} \setminus \mathrm{S}$, $|\bra{\psi} P \ket{\psi}| \leq 1-\epsilon$.
The hidden stabilizer group problem asks to identify $S$ given access
to copies of $\ket{\psi}$.
\end{defi}

By applying the Fourier sampling approach, we provide a unified algorithm
for this problem for qubits and qudits. Moreover, we show that in the qubit case, Fourier sampling corresponds exactly to the familiar Bell difference sampling routine \cite{montanaroLearningStabilizerStates2017,grewalEfficientLearningQuantum2024} on four copies of $\ket \psi$, whereas in the qudit case, it gives rise to a novel measurement routine.
\begin{thm}[Unified algorithm for learning stabilizer 
groups---Informal version of \Cref{thm:learning-stabilizer-groups}]
There is an efficient non-adaptive quantum algorithm for the 
hidden
stabilizer group problem on $n$ qudits which uses $O\left(n\right/\epsilon)$
copies of the state and runs in polynomial time acting 
coherently
on at most $O\left(d\right)$ copies.
\end{thm}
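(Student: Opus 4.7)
The plan is to cast the hidden stabilizer group problem as an instance of the abelian StateHSP and then invoke \Cref{thm:informal-sample-complexity}. The generalised Pauli group $\cP^n$ modulo its center of phases is isomorphic to $G := \ZZ_d^{2n}$, via the Heisenberg--Weyl labelling $g \mapsto W(g)$; a stabilizer group $S \subseteq \cP^n$ (abelian and containing no non-trivial phases) corresponds to a subgroup $H \le G$ that is isotropic with respect to the canonical symplectic form $[\cdot,\cdot]$ on $\ZZ_d^{2n}$.

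The central obstruction is that $W$ is only a \emph{projective} representation, satisfying $W(g)W(h) = \om^{[g,h]} W(h) W(g)$ with $\om = e^{2\pi \iu /d}$, whereas the StateHSP framework requires a genuine linear unitary representation. I would resolve this by passing to $d$ copies and setting $R(g) := W(g)^{\otimes d}$ on $\cH^{\otimes d}$. Then $R(g)R(h) = \om^{d[g,h]} R(h)R(g) = R(h)R(g)$, so $R$ is a faithful linear representation of $\ZZ_d^{2n}$. The hidden-subgroup promise transports to $\ket{\psi^{\otimes d}}$: for $g\in H$, $R(g)\ket{\psi^{\otimes d}} = \ket{\psi^{\otimes d}}$, while for $g \notin H$,
\begin{equation}
\left|\bra{\psi^{\otimes d}} R(g)\ket{\psi^{\otimes d}}\right| = \left|\bra{\psi}W(g)\ket{\psi}\right|^d \le (1-\ep)^d \le 1-\ep,
\end{equation}
so $(\ket{\psi^{\otimes d}},G,R,H)$ is a valid abelian StateHSP instance with promise parameter at least $\ep$.

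Applying \Cref{thm:informal-sample-complexity} to this instance yields an algorithm that collects $O(\log |G|/\ep)$ character-POVM outcomes from $\ket{\psi^{\otimes d}}$, followed by standard abelian-HSP post-processing: each outcome is an element of $H^\perp \le \widehat G \cong \ZZ_d^{2n}$, and Gaussian elimination over $\ZZ_d$ returns generators of $H^\perp$, and hence of $S$. Because $\{R(g)\}_{g\in G}$ is a commuting family of tensor-product Paulis, the character POVM reduces to the projective measurement in their joint eigenbasis; this basis is a Clifford basis on $\cH^{\otimes d}$, so the measurement is efficiently implementable acting coherently on at most $O(d)$ copies. For $d=2$ the routine specialises to Bell difference sampling on four copies, as in Refs.\ \cite{montanaroLearningStabilizerStates2017,grewalEfficientLearningQuantum2024}.

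The main obstacle is the projective-to-linear conversion. Applying the StateHSP framework naively to a single copy of $\ket\psi$ fails, since the Paulis only form a projective representation of $\ZZ_d^{2n}$, and for qudits with $d>2$ one genuinely needs $d$ tensor copies to kill the phase $\om^{[g,h]}$. This is reflected in the negative result of Ref.\ \cite{allcockBellSamplingStabilizer2024} that Bell sampling on two copies alone yields no useful information about the stabilizer group in the qudit case.
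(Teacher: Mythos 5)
Your reduction is the same one the paper uses for odd $d$, but it breaks at $d=2$, and the step where it breaks is papered over by a verification that checks the wrong property. You define $R(g)=W(g)^{\otimes d}$ and confirm $R(g)R(h)=\om^{d[g,h]}R(h)R(g)=R(h)R(g)$; that only establishes that the operators \emph{commute} (trivially, since $\om^d=1$), not that $R$ is a homomorphism. What the character POVM and \Cref{fact:character-projection} actually require is $R(g)R(h)=R(g+h)$, and from $W_xW_y=\tau^{[x,y]}W_{x+y}$ one gets $W_x^{\otimes d}W_y^{\otimes d}=\tau^{d[x,y]}W_{x+y}^{\otimes d}$. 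This phase vanishes when $d$ is odd (there $\tau$ is a power of $\om$ and has order $d$), but for $d=2$ one has $\tau=\iu$ and $\tau^2=-1$, so $W_x^{\otimes 2}W_y^{\otimes 2}=(-1)^{[x,y]}W_{x+y}^{\otimes 2}$ is still only a projective representation of $\ZZ_2^{2n}$. With a non-trivial cocycle the joint eigenvalues of the commuting family are characters of a central extension rather than of $G$, the sums $\frac{1}{|G|}\sum_g\overline{\chi_\la(g)}R(g)$ are not the isotypic projectors, and the annihilator structure underlying \Cref{thm:informal-sample-complexity} is lost. Your own closing remark that the qubit case ``specialises to Bell difference sampling on four copies'' is inconsistent with your prescription of $d=2$ copies: two copies give Bell sampling, which is precisely what does \emph{not} reveal the stabilizer group.

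The paper's fix is to take $D$ copies where $D$ is the order of $\tau$, i.e.\ $D=d$ for odd $d$ and $D=2d=4$ for $d=2$, and set $R(x)=W_x^{\otimes D}$; this makes $R$ a genuine linear representation in all cases and recovers Bell difference sampling on four copies for qubits. With that correction your argument goes through. A secondary, quantitative point: bounding $|\bra{\psi}W(g)\ket{\psi}|^D\le 1-\ep$ is valid but wasteful; using $(1-\ep)^D\le 1-\Omega(D\ep)$ improves the number of POVM rounds by a factor of $D$, which is how the paper reaches the stated copy count rather than an extra factor of $d$.
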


We note that this algorithm can be applied to learn qudit stabilizer
and $t$-doped stabilizer states \cite{leoneLearningTdopedStabilizer2024d} for $t=O\left(\log n\right)$, since, as shown in Refs.\ \cite{grewalEfficientLearningQuantum2024,hangleiterBellSamplingQuantum2024}, learning such states reduces to learning their stabilizer group, which allows to compress the non-stabilizerness in a small part of the state.
In this sense, our algorithm is more versatile than the one presented
in Ref.\ \cite{allcockBellSamplingStabilizer2024} for 
qudits which is limited to
learning non-doped stabilizer states. Moreover, by detecting global stabilizer-like symmetries, our algorithm naturally extends to identifying features associated with \emph{symmetry-protected topological} (SPT) order~\cite{zengQuantumInformationMeets2019a,elseSymmetryProtectionMeasurementbased2012}, offering a connection to the study of quantum phases of matter in condensed matter physics.

In addition, we note that 
Refs.\ 
\cite{grewalImprovedStabilizerEstimation2024d,chenStabilizerBootstrappingRecipe2025} have recently constructed algorithms for
agnostic learning of stabilizer states in the qubit case. These algorithms crucially rely on Bell difference and, in particular, exploit certain concentration properties of the Bell difference sampling distribution. We observe that the distribution $q_\psi$ arising in our qudit setting exhibits analogous concentration properties. We thus expect that the agnostic stabilizer learning algorithms from
Refs.\ 
\cite{grewalImprovedStabilizerEstimation2024d,chenStabilizerBootstrappingRecipe2025} can be generalized to the qudit case by suitably replacing Bell difference sampling with our qudit measurement routine.

Lastly, we point out that the same algorithm can also be applied in a scenario where one is given many different inputs states sharing a common hidden stabilizer group instead of many copies of a single state. Hence, one can learn a stabilizer code from access to different states in the code space.

\paragraph{Improvement for the hidden cut problem via Bell sampling.}
We revisit the hidden cut problem introduced in Ref.\ \cite{boulandStateHiddenSubgroup2025} as another instance of the StateHSP framework. We directly focus on the general version of the problem involving potentially many cuts, called the hidden
many-cut problem, defined as follows:
\begin{defi}[Hidden many-cut problem]
 Let $\ket{\psi}$ be a state vector on $n$ qubits. Suppose that $\ket{\psi}$
is a product of $m\geq1$ factor state vectors
\begin{equation}
\ket{\psi}=\ket{\phi_{1}}_{C_{1}}\otimes\cdots\otimes\ket{\phi_{m}}_{C_{m}}
\end{equation}
for some partition $C_{1}\sqcup\cdots\sqcup C_{m}=\left[n\right]$
such that each factor state vector $\ket{\phi_{k}}_{C_{k}}$ is at least
$\tilde\epsilon$-far from any separable state on $\left|C_{k}\right|$
qubits. The hidden many-cut problem asks to identify the set partition
$C_{1}\sqcup\cdots\sqcup C_{m}$, given copies of $\ket{\psi}$.
\end{defi}

We emphasize that the parameter $\tilde{\epsilon}$ used in the hidden many-cut problem does not coincide with the $\epsilon$ in \cref{def:stateHSP}. Rather, the two are quantitatively related, with $\tilde{\epsilon}^2$ scaling on the order fo $\epsilon$ (see also \cref{prop:non-cuts}). We adopt the notation $\tilde{\epsilon}$ here to remain consistent with the formulation and complexity bounds in Ref.\ \cite{boulandStateHiddenSubgroup2025}, where the resource requirements are expressed in terms of this parameter.
In Ref.\ \cite{boulandStateHiddenSubgroup2025}, the authors have given an algorithm based on generalized phase estimation requiring controlled-SWAP operations. In particular, their algorithm solves this problem using $O\left(n/{\tilde\epsilon}^{2}\right)$
many samples obtained by measuring the output of circuits of depth $O\left(n^{2}\right)+O\left(\log {\tilde\epsilon}^{-1}\right)$ acting coherently on $O\left(1/{\tilde\epsilon}^{2}\right)$
copies and $O\left(n\right)$ auxiliary qubits at a time. Here, we improve upon this significantly by showing that the problem can be solved by a direct implementation of the character POVM as hinted at earlier. Interestingly, this direct implementation corresponds to another well-known quantum learning subroutine, namely \textit{Bell sampling}~\cite{montanaroLearningStabilizerStates2017}.

\begin{thm}[Improved hidden-cut algorithm---Informal version of \Cref{thm:hidden-cut-thm}]
There is an efficient non-adaptive quantum algorithm for the hidden
many-cut problem on $n$ qubits which uses $O\left(n/{\tilde\epsilon}^{2}\right)$
copies of the state and runs in polynomial time, requiring circuits
of constant depths acting on two copies at a time using no additional
auxiliary qubits.
\end{thm}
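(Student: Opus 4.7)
The plan is to cast the hidden many-cut problem as an instance of the abelian StateHSP of \Cref{thm:informal-sample-complexity} by acting with qubit-wise SWAPs on two copies of $\ket\psi$, and to observe that the resulting character POVM reduces to qubit-wise Bell measurements. Take $G=\ZZ_2^n$ with representation $R:G\to U(\cH\otimes\cH)$ defined by $R(\bx)=\bigotimes_{i=1}^n\mathrm{SWAP}_i^{x_i}$, where $\mathrm{SWAP}_i$ exchanges qubit $i$ between the two copies of $\ket\psi$. The candidate hidden subgroup is $H=\{\bx\in\ZZ_2^n:\bx|_{C_k}\text{ is constant for every }k\}$, of size $2^m$, with dual $H^\perp=\{\by:\sum_{i\in C_k}y_i\equiv 0\pmod 2\text{ for all }k\}$.

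To verify the StateHSP promise, observe that for $\bx\in H$, $R(\bx)$ acts as a total SWAP on each block $\ket{\phi_k}\otimes\ket{\phi_k}$ and hence fixes $\ket\psi\otimes\ket\psi$. For $\bx\notin H$ there is some $k$ for which the swap restricted to $C_k$ is partial on a proper nonempty subset $A_k\subsetneq C_k$, so using the swap trick together with the product structure, the overlap factorises as $\abs{(\bra\psi\otimes\bra\psi)R(\bx)(\ket\psi\otimes\ket\psi)}=\prod_k\tr(\rho_{k,A_k}^2)$, where $\rho_{k,A_k}$ is the marginal of $\ket{\phi_k}$ on $A_k$. A short Schmidt-decomposition argument identifies the squared fidelity of $\ket{\phi_k}$ with the closest product state across $A_k|C_k\setminus A_k$ as the largest squared Schmidt coefficient $\lambda_{\max}^2$; the $\epsilon$-far-from-separable assumption then gives $\lambda_{\max}^2\le 1-\epsilon^2$, whence $\tr(\rho_{k,A_k}^2)\le\lambda_{\max}^2\le 1-\epsilon^2$. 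This establishes the StateHSP promise with effective parameter $\epsilon^2$.

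Finally, I would implement the character POVM. Since the operators $R(\bx)$ mutually commute and are simultaneously diagonalised in the $n$-fold qubit-pair Bell basis, the POVM elements factorise as $\Pi_\by=\bigotimes_{i=1}^n\tfrac12(I+(-1)^{y_i}\mathrm{SWAP}_i)$, which are precisely the symmetric ($y_i=0$) and antisymmetric ($y_i=1$) subspace projectors on each qubit pair. This is a coarse-grained Bell measurement carried out independently on each of the $n$ qubit pairs across two copies of $\ket\psi$, realised by a constant-depth circuit with no auxiliary qubits. Invoking \Cref{thm:informal-sample-complexity} with $|G|=2^n$ and effective error $\epsilon^2$ then requires $O(n/\epsilon^2)$ samples; standard abelian HSP post-processing reconstructs $H$ from the samples in $H^\perp$, and the partition is read off from the minimal generators of $H$. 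The key non-routine step is the promise verification: converting the $\epsilon$-far-from-separable hypothesis into a uniform $\epsilon^2$ gap on all marginal purities is the only place where the geometric content of the hidden-cut problem enters beyond the generic StateHSP machinery.
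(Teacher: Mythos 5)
Your proposal is correct and follows essentially the same route as the paper: the same choice $G=\ZZ_2^n$ with qubit-wise SWAPs on two copies, the same hidden subgroup and dual, the same factorized symmetric/antisymmetric (coarse-grained Bell) realization of the character POVM, and the same reduction to the general abelian StateHSP sample-complexity bound with effective gap $\epsilon^2$. The only difference is that you sketch the swap-trick/Schmidt-coefficient proof of the purity bound $\tr(\rho_{k,A_k}^2)\le 1-\epsilon^2$, which the paper simply imports as Proposition 5 of the earlier StateHSP work.
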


This improves upon Ref.\ \cite{boulandStateHiddenSubgroup2025} in all
requirements, namely 
the circuit depth, the number of coherently accessed copies
as well as removing the need for auxiliary registers while keeping the sample and time complexity the same.

\paragraph{Identifying translational symmetries.} As a novel application of the StateHSP framework, we formulate the problem of identifying a hidden translational symmetry of a state. Here, for simplicity, we focus on qubits on a ring in one spatial dimension, although the framework immediately generalizes to translations in higher dimensions, e.g.,  qubits on a two-dimensional cubic lattice.

\begin{defi}[Hidden translation problem]
 Let $\ket{\psi}$ be a state vector on $n$ qubits on a ring. Suppose that
$\ket{\psi}$ is translation-invariant under some subgroup of translations of the qubits. The hidden translation symmetry problem
asks to identify this subgroup.
\end{defi}

\noindent Since translations form a representation of the abelian cyclic group $\mathbb{Z}_n$, this problem is an instance of the abelian StateHSP and by virtue of our \cref{thm:informal-sample-complexity} can be solved efficiently by measuring with the corresponding character POVM.

\begin{thm}[Efficient algorithm for the hidden translation problem---Informal version of \Cref{thm:hidden-translation}]
There is an efficient non-adaptive quantum algorithm for the hidden
translation problem on $n$ qubits which uses $O\left(\log n\right)$ copies
of the state and runs in polynomial time acting on a single copy of $\ket{\psi}$ at a time.
\end{thm}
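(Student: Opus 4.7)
The plan is to cast the hidden translation problem as an abelian StateHSP with parent group $G = \ZZ_n$ and unitary representation $R(k) = T^k$, where $T$ denotes the cyclic shift of the $n$ qubits around the ring. The hidden subgroup is then the group of translations that leave $\ket\psi$ invariant, a subgroup of $\ZZ_n$. Since $|G|=n$, \cref{thm:informal-sample-complexity} immediately yields the sample complexity $O(\log n/\epsilon)$ and a polynomial runtime, provided we exhibit an efficient implementation of the character POVM that operates on only a single copy of $\ket\psi$ at a time. The remainder of the proof is therefore a concrete circuit construction together with the standard abelian-HSP post-processing.

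The characters of $\ZZ_n$ are $\chi_\lambda(k)=e^{2\pi\iu\lambda k/n}$, so the character POVM elements take the explicit form
\[
\Pi_\lambda = \frac{1}{n}\sum_{k=0}^{n-1} e^{-2\pi\iu\lambda k/n}\, T^k,
\]
which are precisely the spectral projectors of $T$ onto the eigenvalue $e^{2\pi\iu\lambda/n}$. The natural way to sample $\lambda$ is quantum phase estimation on $T$, using an auxiliary register of $\lceil\log_2 n\rceil$ qubits. Each controlled power $T^{2^j}$ needed for phase estimation is a controlled permutation on the $n$-qubit system and can be built from $O(n)$ Fredkin gates (since $T$ itself decomposes into $n-1$ nearest-neighbour SWAPs, and successive powers act as fixed permutations), while the required QFT over $\ZZ_n$ is implementable in $\mathrm{poly}(\log n)$ gates by standard abelian QFT constructions. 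The resulting circuit has depth $\mathrm{poly}(n)$ and accesses a single copy of $\ket\psi$ per shot, as required.

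The abelian-HSP post-processing is particularly simple in the cyclic case: given $m=O(\log n/\epsilon)$ samples $\lambda_1,\dots,\lambda_m$ drawn from $q_\psi$, the output is supported on $H^\perp$, which for $G=\ZZ_n$ is again a cyclic group generated by $|G|/|H|$. One recovers this generator, and hence $H$, by computing $d = \gcd(\lambda_1,\dots,\lambda_m,n)$ in polynomial classical time. Standard counting of cosets in $\ZZ_n$ shows that $O(\log n/\epsilon)$ samples suffice for $\langle\lambda_1,\dots,\lambda_m\rangle = H^\perp$ with high probability, exactly as in the general analysis underlying \cref{thm:informal-sample-complexity}.

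The main technical subtlety I anticipate lies in the POVM implementation when $n$ is not a power of $2$: the QFT over $\ZZ_n$ is then not the native Hadamard transform, and one must appeal either to Kitaev's approximate phase estimation or to an exact abelian QFT, and then carefully bound the resulting total-variation distance to the ideal distribution $q_\psi$ so that the sampling guarantees are preserved. An alternative that avoids this issue is the Fourier-sampling implementation of the character POVM of Ref.~\cite{boulandStateHiddenSubgroup2024}: prepare a uniform superposition over $\ZZ_n$ in an ancilla register, apply the single controlled operation $\sum_k \ket k\bra k\otimes T^k$, and measure the ancilla in the Fourier basis. Either route yields the claimed efficient, non-adaptive, single-copy algorithm.
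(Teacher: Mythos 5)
Your proposal is correct and follows essentially the same route as the paper: cast the problem as an abelian StateHSP over $G=\ZZ_n$ with $R(k)=T^k$, invoke the general $O(\log|G|/\epsilon)$ sample-complexity bound, and realize the character POVM via quantum Fourier sampling (equivalently, phase estimation on $T$) using an $O(\log n)$-qubit auxiliary register and controlled translations built from controlled-SWAPs. The paper likewise concludes that the direct eigenbasis measurement is not obviously circuit-implementable here and falls back on exactly this auxiliary-register construction, with circuit depth $O(n\log n)$; your gcd-based post-processing and the caveat about $n$ not being a power of two are consistent refinements of the same argument.
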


\subsection{Technical overview}
\paragraph{Dependence of the sample complexity on $\epsilon$.}
To establish our main result (\cref{thm:informal-sample-complexity}), we combine standard facts from the representation theory of finite abelian groups with a straightforward anti-concentration argument. We begin with stating the output distribution of Fourier sampling
\begin{equation}
    q_\psi(\la)= \frac{1}{\left|G\right|}\sum_{g\in G}\overline{\chi_\la(g)} \bra{\psi}R(g) \ket{\psi}.
\end{equation}
By character orthogonality, $q_\psi$ is supported only on $H^\perp$, the dual of the hidden subgroup $H$. Crucially, we show that $q_\psi$ anti-concentrates over all proper subgroups of $H^\perp$. Specifically, for any proper subgroup $K^\perp < H^\perp$, we have
\begin{equation*}
    q_\psi \big(K^\perp \big) \leq 1- \Omega(\epsilon).
\end{equation*}
This anti-concentration ensures that after $O(\log |G| / \epsilon)$ samples from $q_\psi$, one obtains a complete generating set for $H^\perp$ with high probability, and thus efficiently recovers $H$.

\paragraph{Learning abelianization.}
In the context of learning stabilizer groups, we adopt the known strategy of first learning the stabilizer group up to phases~\cite{montanaroLearningStabilizerStates2017,grewalEfficientLearningQuantum2024,hangleiterBellSamplingQuantum2024}. Within our framework, this can be interpreted as learning the \emph{abelianization} of the group. More generally, even for non-abelian groups $G$, one can consider their abelianization $G/[G,G]$, which is always abelian. For example, the Pauli group $\mathcal{P}_n$ abelianizes to $\mathbb{Z}_2^{2n}$, corresponding to Pauli operators modulo phases. This perspective suggests a broader principle: viewing learning tasks through the lens of abelianizations may help uncover new applications of the abelian StateHSP and simplify existing ones.

\paragraph{Known measurement routines are instances of Fourier sampling.}
A separate conceptual contribution of this work is to show that important quantum learning routines such as Bell sampling and Bell difference sampling are special cases of Fourier sampling for appropriate choices of 
group $G$ and representation $R$. In these cases, we show that the character POVM in \cref{eq:povm} associated with Fourier sampling exactly matches the measurement implemented by these routines. Notably, these choices of $G$
and $R$ arise naturally from the structure of the corresponding learning problems.

\subsection{Related work}
Most closely related works have already been discussed in the context of our applications. Beyond these, we note recent works on quantum algorithms for \emph{testing} symmetries rather than learning them~\cite{labordeTestingSymmetryQuantum2023,rethinasamyQuantumComputationalComplexity2025}. These works employ similar representation-theoretic tools. They also clarify distinct notions of symmetry for quantum states in the mixed-state setting that we pick up on in \cref{ssec:mixed-input-states}.

Additionally, Ref.~\cite{bastidasUnificationFiniteSymmetries2025} develops a general circuit framework for implementing character projection, targeting mostly more involved group actions than those considered in this work. Lastly, Ref.~\cite{wakehamInferenceInterferenceInvariance2024} explores the hidden subgroup problem and Fourier sampling approach in the context of designing heuristic quantum machine learning algorithms.

\section{Abelian state hidden subgroup problem}
In this section, we present the Fourier sampling based approach to the abelian StateHSP. This approach is applicable to identifying hidden symmetries associated to subgroups of any finite abelian group $G$. In \cref{ssec:rep-theory-of-abelian-groups}, we give the necessary background on the representation theory of such groups. In \cref{sec:our-approach-to-abelian-state-hsp}, we outline the algorithm, discuss its key properties and prove our first main result, the upper bound of $O(\log|G|/\epsilon)$ on its sample complexity.  In \cref{ssec:quantum-fourier-sampling}, we discuss how to implement the algorithm as efficiently as possible. Lastly, in \cref{ssec:mixed-input-states}, we comment on extending the StateHSP framework to mixed state inputs.

\subsection{Basics of representation theory of abelian groups}\label{ssec:rep-theory-of-abelian-groups}

Let $G$ be a finite abelian group, then a \textit{character} of $G$ is a group homomorphism $\chi:G\to \mathbb{S}^1$ into the circle group $\mathbb{S}^1$. The set of all characters of $G$, denoted $\wG$, forms the dual group of $G$ and is itself abelian. We have that
$G\cong\wG$ and, therefore, the same set of labels can be used for the group and its dual.
Any subgroup $H\leq G$ defines a dual subgroup $H^\perp\leq\wG$, referred to as the {\em annihilator} of $H$, given by
\begin{equation}
H^\perp=\{\chi_\la\in\wG:\chi_\la(h)=1\,,\forall h\in H\} 
\end{equation}
with $|H|\cdot|H^\perp|=|G|$. Equivalently, $H$ is uniquely determined by $H^\perp$ via
\begin{equation}
H=\{h\in G:\chi_\la(h)=1\,,\forall\,\chi_\la\in H^\perp\} 
\end{equation}
and, therefore, if $\{\chi_{\la_1},\ldots,\chi_{\la_k}\}$ is a generating set for $H^\perp$, then
\begin{align}
H &= \bigcap_{i=1}^{k}\ker(\chi_{\lambda_i}), \\
\text{where} \quad 
\ker(\chi_{\lambda}) &= \{g \in G \,:\, \chi_{\lambda}(g) = 1\}.
\end{align}
Further, the characters satisfy orthogonality relations. Here, we only state a special case of these relations that is directly relevant to this work.

\begin{fact}[Character sum]\label{fact:character-sum}
Let $H\leq G$ be a subgroup of $G$ and $H^\perp\leq\wG$ the corresponding annihilator 
subgroup in the dual $\wG$, then
\begin{equation}
\sum_{\la\in H^\perp}\chi_\la(g)=\begin{cases}
|H^\perp|\,,&g\in H,\\
0\,,&g\notin H.
\end{cases}
\end{equation}
\end{fact}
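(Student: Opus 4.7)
The plan is to split on whether $g\in H$ or $g\notin H$ and use a standard character-sum trick in the latter case. The case $g\in H$ is immediate: by the defining property of the annihilator, $\chi_\la(g)=1$ for every $\chi_\la\in H^\perp$, so the sum is just $|H^\perp|$.

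For the harder case $g\notin H$, I would use the characterization already recorded in the excerpt, namely $H=\{h\in G:\chi_\la(h)=1,\forall\,\chi_\la\in H^\perp\}$. Since $g\notin H$, this characterization guarantees the existence of some $\chi_\mu\in H^\perp$ with $\chi_\mu(g)\neq 1$. Let $S:=\sum_{\la\in H^\perp}\chi_\la(g)$. Multiplying by $\chi_\mu(g)$ and using that $\chi_\mu\chi_\la$ is again a character in $H^\perp$ (since $H^\perp$ is a subgroup of $\wG$), together with the fact that left multiplication by $\chi_\mu$ is a bijection of $H^\perp$ onto itself, I obtain
\begin{equation}
\chi_\mu(g)\,S \;=\; \sum_{\la\in H^\perp}\chi_\mu(g)\chi_\la(g) \;=\; \sum_{\la\in H^\perp}(\chi_\mu\chi_\la)(g) \;=\; \sum_{\la'\in H^\perp}\chi_{\la'}(g) \;=\; S.
\end{equation}
Hence $(\chi_\mu(g)-1)S=0$, and since $\chi_\mu(g)\neq 1$, one concludes $S=0$.

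The only potentially subtle step is the existence of $\chi_\mu\in H^\perp$ with $\chi_\mu(g)\neq 1$ for $g\notin H$; this is really the non-trivial ``double annihilator'' fact $(H^\perp)^\perp=H$, but it is already asserted (and hence may be assumed) in the text surrounding \cref{fact:character-sum}. Everything else is a one-line algebraic manipulation using that $H^\perp$ is a subgroup of $\wG$ and that characters multiply. No homogeneous calculation or representation-theoretic machinery beyond what is already developed in \cref{ssec:rep-theory-of-abelian-groups} is required.
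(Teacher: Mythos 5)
Your proof is correct. The paper states \cref{fact:character-sum} without proof (it is a standard orthogonality relation), so there is nothing to compare against; your argument---the trivial evaluation for $g\in H$, and the translation trick $\chi_\mu(g)S=S$ for $g\notin H$ using the double-annihilator characterization $H=(H^\perp)^\perp$ already asserted in \cref{ssec:rep-theory-of-abelian-groups}---is exactly the canonical derivation and is complete.
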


The characters are also central to the representation theory of finite abelian groups. In particular, since every \emph{irreducible representation} of a finite abelian group is one-dimensional, each character corresponds to such a representation. Formally, let $R:G\to \mathrm{U}(\mathcal{H})$ be a unitary representation of the finite group $G$ on the Hilbert space $\mathcal{H}$. Then $\cH$ decomposes into irreducible representations of $G$ as
\begin{equation}\label{eq:Hdecomp}
    \cH\cong\bigoplus_\lambda V_\lambda^{\oplus n_\la}\cong\bigoplus_\la V_\la\otimes\CC^{n_\la} \,,
\end{equation}
where $n_\la$ is the multiplicity of the irreducible representation $V_\lambda$ and $V_\lambda \otimes\CC^{n_\la}$ is called the \emph{$\lambda$-isotypic component} 
of $\mathcal{H}$.
In the abelian case, each $V_\lambda$ is one-dimensional and corresponds to the character $\chi_\lambda$. An orthonormal basis for the Hilbert space can be labeled as
\begin{equation}\label{eq:basis}
    \ket{\la,\nu_\la}:=\ket\la\otimes\ket{\nu_\la}\,,
\end{equation}
where $\la$ runs over irreducible representations and $\nu_\la\in\{1,2,\ldots,n_\la\}$. This basis is also a common eigenbasis for the mutually commuting operators $\{R(g)\}_{g\in G}$ with eigenvalues given by the characters,
\begin{equation}
    R(g) \ket{\la,\nu_\la} = \chi_{\la}(g) \ket{\la,\nu_\la}\,.
\end{equation}
We focus on projectors onto $\lambda$-isotypic components, which in the abelian case are given by,
\begin{equation}\label{eq:isotypic-proj}
\Pi_\la=\sum_{\nu_\la=1}^{n_\la}\ket{\la,\nu_\la}\bra{\la,\nu_\la}\,,
\end{equation}
where $\ket{\la,\nu_\la}$ is the orthonormal basis described in \cref{eq:basis}. In general, these projectors can also be expressed as follows.
\begin{fact}[Character projection formula]\label{fact:character-projection}
Let $G$ be a finite group, $R$ a representation on $\cH$ and let $\chi_\la$ denote the irreducible  characters. Then $\cH$ decomposes as in \cref{eq:Hdecomp} and
\begin{equation}
\Pi_\la=\frac{\dim V_\la}{|G|}\sum_{g\in G}\overline{\chi_\la(g)}R(g)\,.
 \end{equation}
\end{fact}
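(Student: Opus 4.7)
The plan is to verify directly that the operator
\[
P_\la := \frac{\dim V_\la}{|G|}\sum_{g\in G}\overline{\chi_\la(g)}\,R(g)
\]
equals the isotypic projector $\Pi_\la$. Since both operators are block-diagonal with respect to the decomposition in \cref{eq:Hdecomp}, it suffices to show that $P_\la$ acts as the identity on each copy of $V_\la$ in $\cH$ and as zero on every $V_\mu$ with $\mu\neq\la$.

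First, I would establish that $P_\la$ is $G$-equivariant, i.e.\ $R(h)P_\la R(h)^{-1}=P_\la$ for all $h\in G$. Reindexing the defining sum via $g\mapsto h^{-1}gh$ gives
\[
R(h)P_\la R(h)^{-1}=\frac{\dim V_\la}{|G|}\sum_{g\in G}\overline{\chi_\la(h^{-1}gh)}R(g),
\]
and since characters are class functions, $\chi_\la(h^{-1}gh)=\chi_\la(g)$, so the right-hand side is $P_\la$. Consequently, $P_\la$ commutes with the action of $G$ and, in particular, preserves each isotypic component $V_\mu\otimes\CC^{n_\mu}$.

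Next, I would apply Schur's lemma: on the isotypic piece $V_\mu\otimes\CC^{n_\mu}$, any $G$-intertwiner is of the form $I_{V_\mu}\otimes M_{\la,\mu}$ for some matrix $M_{\la,\mu}$ acting on the multiplicity space. Taking the trace on this component yields
\[
\tr\!\bigl(P_\la|_{V_\mu\otimes\CC^{n_\mu}}\bigr)=\dim(V_\mu)\,\tr(M_{\la,\mu})=\frac{\dim V_\la}{|G|}\sum_{g\in G}\overline{\chi_\la(g)}\,n_\mu\,\chi_\mu(g),
\]
where I used $\tr_{V_\mu\otimes\CC^{n_\mu}}(R(g))=n_\mu\chi_\mu(g)$. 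The orthogonality relation $\tfrac{1}{|G|}\sum_g\overline{\chi_\la(g)}\chi_\mu(g)=\delta_{\la,\mu}$ then reduces this to $\dim(V_\la)\,n_\mu\,\delta_{\la,\mu}$. Thus $\tr(M_{\la,\mu})=n_\la\delta_{\la,\mu}$. Repeating the argument after replacing the trace on the full multiplicity space by restriction to any one-dimensional subspace of $\CC^{n_\mu}$ (equivalently, considering a single irreducible copy and invoking Schur's lemma on a single copy of $V_\la$) shows that $M_{\la,\la}$ is the identity and $M_{\la,\mu}=0$ for $\mu\neq\la$. Hence $P_\la$ is the identity on the $\la$-isotypic component and zero elsewhere, which is exactly $\Pi_\la$.

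The main obstacle is the clean separation of representation-space action from multiplicity-space action when invoking Schur's lemma. In the abelian case relevant to this paper, this difficulty disappears entirely: each $V_\mu$ is one-dimensional, the basis \eqref{eq:basis} diagonalizes every $R(g)$ with eigenvalue $\chi_\mu(g)$, and one computes directly
\[
P_\la\ket{\mu,\nu_\mu}=\frac{1}{|G|}\sum_{g\in G}\overline{\chi_\la(g)}\chi_\mu(g)\,\ket{\mu,\nu_\mu}=\delta_{\la,\mu}\ket{\mu,\nu_\mu},
\]
so $P_\la=\Pi_\la$ follows immediately from character orthogonality, bypassing the Schur-lemma bookkeeping.
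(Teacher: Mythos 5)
Your argument is correct and is the standard textbook proof of the character projection formula; the paper states this as a Fact without proof, so there is no in-paper argument to compare against. The one step that needs care is the passage from $\tr(M_{\la,\mu})=n_\la\delta_{\la,\mu}$ to $M_{\la,\la}=I$ and $M_{\la,\mu}=0$: the trace over the full block only pins down the trace of the multiplicity-space matrix, not the matrix itself. You do flag the fix (apply Schur's lemma copy by copy, i.e.\ compute each matrix element of $M_{\la,\mu}$ as the scalar by which a $G$-intertwiner $V_\mu\to V_\mu$ acts, then evaluate that scalar by a trace), and with that the general case is complete. For the purposes of this paper only the abelian case is used, and there your closing computation---that $P_\la$ acts on the eigenvector $\ket{\mu,\nu_\mu}$ of \cref{eq:basis} with eigenvalue $\frac{1}{|G|}\sum_{g}\overline{\chi_\la(g)}\chi_\mu(g)=\delta_{\la,\mu}$---is exactly the one-line verification implicit in the discussion around \cref{eq:isotypic-proj}; it needs no Schur bookkeeping and is the cleanest route to the identity $\Pi_\la=\frac{1}{|G|}\sum_g\overline{\chi_\la(g)}R(g)$ actually invoked in \cref{eq:povm}.
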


We finish this section with a general statement about probability distributions $p$ defined on finite abelian groups $G$. Given a set of group elements $\{g_1, \dots, g_m \}$, we denote the 
 subgroup generated by them by $\langle g_1,\dots,g_m \rangle$. Further, we denote the probability mass on a subgroup $K\leq G$ by $p(K)=\sum_{g\in K}p(g)$.

\begin{lemma}[Probability mass on generated subgroup]\label{lem:probability-mass-of-span}
Let $\epsilon, \delta \in (0,1)$ and let $p$ be a distribution over a finite abelian group $G$. Suppose $g_1,\dots,g_m$ are i.i.d.\  samples from $p$ with 
\begin{equation}
m\geq \frac{2(\log |G|+\log(1/\delta))}{\epsilon}, 
\end{equation}
then with probability at least $1-\delta$,
\begin{equation}
    p(\langle g_1,\dots,g_m  \rangle)\geq 1-\epsilon\,.
\end{equation}
\end{lemma}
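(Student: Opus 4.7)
I would set up the nested chain of subgroups $K_i := \langle g_1,\dots,g_i\rangle$ with $K_0 = \{e\}$, and let the filtration be $\mathcal{F}_i := \sigma(g_1,\dots,g_i)$. Call step $i$ a \emph{growth step} if $K_i \supsetneq K_{i-1}$, and let $N_m$ denote the total number of growth steps in $m$ trials. The key deterministic observation is that, by Lagrange's theorem, $|K_i|/|K_{i-1}|$ is an integer at least $2$ at every growth step, so $|K_m| \geq 2^{N_m}$, which forces $N_m \leq \log_2|G|$ with probability one.

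Define the stopping time $T := \min\{i : p(K_i) \geq 1-\epsilon\}$. On the event $\{T \leq m\}$ we have $p(\langle g_1,\dots,g_m\rangle) \geq p(K_T) \geq 1-\epsilon$, so the desired conclusion already holds; hence it suffices to show $\Pr[T > m] \leq \delta$. The crucial probabilistic input is that on the event $\{T > i-1\}$, by definition $p(K_{i-1}) < 1-\epsilon$, and therefore
\begin{equation}
\Pr\!\left[g_i \notin K_{i-1} \,\middle|\, \mathcal{F}_{i-1}\right] \;=\; 1 - p(K_{i-1}) \;>\; \epsilon.
\end{equation}

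From here I would conclude via a coupling and a Chernoff estimate. Construct auxiliary i.i.d.\ variables $Z_1,\dots,Z_m \sim \mathrm{Bernoulli}(\epsilon)$ on a common probability space with the $g_i$ (reusing the uniform randomness used to sample each $g_i$ from $p$), so that the indicator $\mathbf{1}[g_i \notin K_{i-1}]$ stochastically dominates $Z_i$ whenever $T > i-1$. Combined with the deterministic cap $N_m \leq \log_2|G|$, this yields on $\{T > m\}$ the pointwise inequality $\sum_{i=1}^m Z_i \leq N_m \leq \log_2|G|$, and hence
\begin{equation}
\Pr[T > m] \;\leq\; \Pr\!\left[\mathrm{Bin}(m,\epsilon) \leq \log_2 |G|\right].
\end{equation}
With $m \geq 2(\log|G| + \log(1/\delta))/\epsilon$ the mean of the binomial comfortably exceeds $\log_2|G|$ (the ratio of logarithm bases is absorbed in the factor $2$), so a standard multiplicative lower-tail Chernoff bound yields the required $\delta$.

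The conceptual content of the argument—and the reason the bound is this mild—is that each ``uninformative'' draw (one lying in the already-generated subgroup) costs at most a factor of $2$ in $|K_i|$, so at most $\log_2|G|$ informative draws ever suffice, while each draw is informative with conditional probability at least $\epsilon$ until we succeed. The main technical care lies in setting up the coupling rigorously—so as to convert the conditional lower bound on growth probabilities into a bona fide binomial lower bound, despite the fact that the growth events are neither independent nor identically distributed—and in choosing the right Chernoff form to land the factor of $2$ stated in the lemma.
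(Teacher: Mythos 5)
Your argument is correct and is precisely the argument the paper has in mind (the paper only sketches it: progress probability at least $\epsilon$ per step until the mass reaches $1-\epsilon$, at most $\log_2|G|$ doublings by Lagrange, then a Chernoff bound, exactly as you lay out with the stopping time and Bernoulli coupling). The only place needing care is the final numerical step: with the crude bound $\Pr[X\le(1-\gamma)\mu]\le e^{-\gamma^2\mu/2}$ the stated constant $2$ works cleanly if $\log$ is read as $\log_2$ throughout, whereas with natural logarithms you need the tight (KL/relative-entropy) form of the lower-tail Chernoff bound to absorb the $1/\ln 2$ discrepancy between the doubling cap $\log_2|G|$ and the mean $2(\ln|G|+\ln(1/\delta))$.
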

Similar statements have been used in Refs.\ \cite{grewalEfficientLearningQuantum2024,chenStabilizerBootstrappingRecipe2025} for distributions over $\mathbb{F}_2^n$ in the context of stabilizer learning. Intuitively, with every new independent sample, the chance of making progress in growing the generated subgroup is at least $\epsilon$ but the number of generators is bounded by $\log |G|$. The result then follows from a Chernoff bound.

\subsection{Solving the abelian StateHSP via Fourier sampling}\label{sec:our-approach-to-abelian-state-hsp}
The state hidden subgroup problem as introduced in Ref.\ \cite{boulandStateHiddenSubgroup2025} is the following problem:
\begin{defi}[State hidden subgroup problem (StateHSP) \cite{boulandStateHiddenSubgroup2025}]\label{def:stateHSP}
Let $G$ be a finite group with a unitary representation $R:G\to \mathrm{U}(\cH)$ acting on the Hilbert space $\cH$ and let $H\leq G$ be a subgroup. Assume that you have access
to copies of an unknown quantum state vector $\ket\psi\in\cH$ with the promise that
\begin{enumerate}
    \item  $R(h)\ket\psi=\ket\psi,\; \forall h\in H$, and
    \item $|\bra\psi R(g)\ket\psi|<1-\ep$, whenever $g\notin H$, where $\epsilon >0$.
\end{enumerate}
The problem is to identify $H$.
\end{defi}

Below we will describe our Fourier sampling-based approach to solving the StateHSP, in the case where $G$ is abelian. Fourier sampling corresponds to measuring $\ket\psi$ repeatedly using the \textit{character POVM}
\begin{equation}
\label{eq:povm}
\Pi_{\lambda}=\frac{1}{\left|G\right|}\sum_{g\in G}\overline{\chi_\la(g)}R(g)\,,
\end{equation}
which is precisely the projector onto the $\la$-isotypic component of the representation $R$ by~\Cref{fact:character-projection}.

Throughout, we will denote the output distribution from measuring the POVM $\Pi_\lambda$ as
\begin{equation}\label{eq:q_psi}
    q_\psi(\la)=\tr(\Pi_\la\ket\psi\bra\psi) = \frac{1}{\left|G\right|}\sum_{g\in G}\overline{\chi_\la(g)} \bra{\psi}R(g) \ket{\psi} .
\end{equation}
As a consequence of~\Cref{fact:character-sum}, we find that for any subgroup $K\leq G$ the probability mass of $q_\psi$ on the annihilator $K^\perp$ satisfies the following relation.
\begin{lemma}[$q_\psi$-mass on subgroups]\label{lem:q_psi-weight}
Let $K\leq G$ be a subgroup of $G$ and let $K^\perp$ be its corresponding annihilator. Then,
    \begin{equation}
        q_\psi(K^\perp) = \tr(\Pi_K \ket{\psi}\bra{\psi} )  = \frac{1}{|K|} \sum_{k\in K} \bra{\psi} R(k)\ket{\psi}\,, \label{eq:subgroup-mass}
    \end{equation}
    where $\Pi_K= \frac{1}{|K|} \sum_{k\in K} R(k)$.
\end{lemma}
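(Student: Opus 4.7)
The plan is to unpack $q_\psi(K^\perp) = \sum_{\lambda\in K^\perp} q_\psi(\lambda)$ using the explicit form of the character POVM and reduce the inner sum over $\lambda\in K^\perp$ via the character orthogonality relation in \Cref{fact:character-sum}. The second equality in the statement is immediate from the definition of $\Pi_K$ and linearity of trace, so the entire content is in showing
\begin{equation}
\sum_{\lambda\in K^\perp}\Pi_\lambda=\Pi_K=\frac{1}{|K|}\sum_{k\in K}R(k).
\end{equation}

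Concretely, I would first substitute \Cref{eq:povm} to obtain
\begin{equation}
\sum_{\lambda\in K^\perp}\Pi_\lambda=\frac{1}{|G|}\sum_{g\in G}\biggl(\sum_{\lambda\in K^\perp}\overline{\chi_\lambda(g)}\biggr) R(g).
\end{equation}
Next, I would apply \Cref{fact:character-sum} to the inner sum. The fact is usually stated as a sum of $\chi_\lambda(g)$ over $\lambda\in H^\perp$ being $|H^\perp|$ when $g\in H$ and zero otherwise; here I use it with $H$ replaced by $K$ (noting that $(K^\perp)^\perp=K$ under the canonical identification $G\cong\wG$), and I pass to the complex conjugate by the observation that $\overline{\chi_\lambda(g)}=\chi_\lambda(g^{-1})$ together with $g\in K\iff g^{-1}\in K$. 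This collapses the outer sum to $g\in K$ and produces an overall factor $|K^\perp|/|G|=1/|K|$ by the identity $|K|\cdot|K^\perp|=|G|$ recorded in \Cref{ssec:rep-theory-of-abelian-groups}.

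Taking traces against $\ket\psi\bra\psi$ on both sides then gives both claimed equalities at once. I do not anticipate a genuine obstacle here: the only subtlety is making sure the roles of $G$ and $\wG$ are kept straight when invoking \Cref{fact:character-sum} for the annihilator of $K^\perp$, and keeping track of the complex conjugation inside $\Pi_\lambda$. Everything else is bookkeeping with finite sums.
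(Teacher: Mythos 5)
Your proposal is correct and follows essentially the same route as the paper: sum the character POVM over $K^\perp$, interchange the sums, collapse the inner character sum via \Cref{fact:character-sum} to obtain $\Pi_K$, and take the trace against $\ket\psi\bra\psi$. Your explicit handling of the complex conjugation via $\overline{\chi_\lambda(g)}=\chi_\lambda(g^{-1})$ and $g\in K\iff g^{-1}\in K$ is a minor point the paper leaves implicit, but it is the same argument.
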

\noindent Note that $\Pi_K= \frac{1}{|K|} \sum_{k\in K} R(k)$ is the projector onto the subspace that is invariant under the $K$-action. So, the $q_\psi$-mass on some $K^\perp$ measures precisely how close $\ket{\psi}$ is to being invariant under $K$.
\begin{proof}
Using~\Cref{fact:character-sum}, we find that
\begin{align}
\sum_{\lambda\in K^{\perp}}\Pi_\lambda& =\frac{1}{\left|G\right|}\sum_{g\in G}\Big(\sum_{\lambda\in K^{\perp}}\overline{\chi_\la(g)}\,\Big) R(g) \,,\\
\nonumber
&=\frac{|K^{\perp}|}{\left|G\right|}\sum_{k\in K} R(k) \,,\\
\nonumber
&= \frac{1}{\left|K\right|}\sum_{k\in K} R(k)\,, \\
\nonumber
&= \Pi_K \,,
\end{align}
where we have used $\big|K^{\perp}\big|\big|K\big| =|G|$. By linearity of the trace, we find,
\begin{equation}
    \sum_{\lambda\in K^{\perp}}q_{\psi}(\lambda) = \sum_{\lambda\in K^{\perp}}\tr(\Pi_\la\ket\psi\bra\psi) = \tr(\Pi_K \ket{\psi}\bra{\psi} ).
\end{equation}
\end{proof}

Hence, if $\ket\psi$ is invariant under the $R$-action of the hidden
subgroup $H\leq G$ as assumed in \cref{def:stateHSP}, then
$q_\psi$ is supported only on $H^\perp$, 
meaning
\begin{equation}
q_\psi(H^\perp) =1 \,.
\end{equation}
From an algorithmic viewpoint,
this suggests that repeatedly sampling from $q_\psi$ gradually reveals $H^\perp$. In particular, a sufficiently large number of i.i.d.\  
samples is likely to form a generating set for $H^\perp$. But exactly how many samples are sufficient to ensure this?

The key factor here is the parameter
$\epsilon$ from \cref{def:stateHSP}. Generally, the larger $\epsilon$, the more uniformly $q_\psi$ is distributed over $H^\perp$, increasing the likelihood that new samples remain independent of those already obtained.  
In what follows, we present a straightforward analysis that quantifies this relationship, showing that $O(\log(|H^\perp|/\epsilon))$ samples suffice to form a generating set for $H^\perp$. This analysis is based on demonstrating that $q_\psi$ is anti-concentrated on the proper subgroups of $H^\perp$.
 
\begin{lemma}[Anti-concentration of $q_\psi$]\label{lem:anti-concentration}
Let $G,H,R$ and $\ket{\psi}$ be defined as in \Cref{def:stateHSP}. In particular, assume that $\ket{\psi}$ is such that
\begin{enumerate}
    \item  $R(h)\ket\psi=\ket\psi,\; \forall h\in H$, and
    \item $|\bra\psi R(g)\ket\psi|<1-\ep$, whenever $g\notin H$.
\end{enumerate}
Let $q_\psi$ be defined as in \cref{eq:q_psi}.
Then, for all proper subgroups $K^{\perp}<H^{\perp}$ of $H^{\perp}$, it holds that,
\begin{equation}
q_\psi(K^\perp)  \leq1-\frac{\epsilon}{2}.
\end{equation}
\end{lemma}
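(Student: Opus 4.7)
The plan is to exploit the duality between subgroups of $G$ and subgroups of $\widehat{G}$, reducing the statement about $q_\psi$-mass on $K^\perp$ to a statement about character expectations summed over a group $K$ strictly containing $H$.

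First, I would use the standard Pontryagin-type duality: $K^\perp < H^\perp$ corresponds under the annihilator map to $H < K \leq G$, and this containment is strict since $K^\perp \neq H^\perp$. Next, I would invoke \Cref{lem:q_psi-weight} to rewrite
\begin{equation}
q_\psi(K^\perp) \;=\; \frac{1}{|K|}\sum_{k\in K}\bra{\psi}R(k)\ket{\psi}.
\end{equation}
Since the left-hand side is real, I can split the sum into $k\in H$ and $k\in K\setminus H$ and take real parts.

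For the first part, the StateHSP promise (1) gives $R(h)\ket\psi=\ket\psi$ and hence $\bra{\psi}R(h)\ket{\psi}=1$ for every $h\in H$, contributing $|H|/|K|$. For the second part, promise (2) together with $K\setminus H \subseteq G\setminus H$ yields $|\bra{\psi}R(k)\ket{\psi}|\leq 1-\epsilon$, so taking real parts and triangle-inequality bounds I obtain
\begin{equation}
q_\psi(K^\perp) \;\leq\; \frac{|H|}{|K|} + \frac{|K|-|H|}{|K|}(1-\epsilon) \;=\; 1 - \epsilon\left(1-\frac{|H|}{|K|}\right).
\end{equation}

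Finally, since $H$ is a proper subgroup of $K$ in an abelian (hence finite) group, Lagrange's theorem forces $|K|/|H|\in\mathbb{Z}_{\geq 2}$, so $|H|/|K|\leq 1/2$ and the bound becomes $q_\psi(K^\perp)\leq 1-\epsilon/2$. The proof is essentially mechanical once the duality reversal is invoked; the only step requiring a bit of care is noting that $q_\psi(K^\perp)$ is real (so imaginary parts of the $\bra\psi R(k)\ket\psi$ terms cancel pairwise, which follows from the representation being unitary and $K$ being closed under inverses) so that the absolute-value bound on promise (2) can be applied termwise.
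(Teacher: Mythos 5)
Your proposal is correct and follows essentially the same route as the paper: apply \Cref{lem:q_psi-weight}, split the sum over $K$ into $H$ and $K\setminus H$, bound the two parts using the two promises, and finish with Lagrange's theorem giving $|H|/|K|\leq 1/2$. Your explicit remark that $q_\psi(K^\perp)$ is real (so one may pass to real parts before applying the termwise bound $|\bra\psi R(k)\ket\psi|\leq 1-\epsilon$) is a small but welcome tightening of a step the paper leaves implicit.
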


\begin{proof}
First, from \cref{lem:q_psi-weight}, we have
\begin{equation}
    q_\psi(K^\perp) = \frac{1}{|K|}\sum_{k\in K}\bra{\psi} R(k) \ket{\psi}.
\end{equation}
 Note that by assumption $H<K$, in fact, $K$ is strictly larger than $H$. Next, we split the sum over $K$ and use the assumptions on $\ket{\psi}$,
 \begin{align}
 q_\psi(K^\perp) & =\frac{1}{|K|}\left(\sum_{k\in H}1+\sum_{k\in K\setminus H}\bra{\psi} R(k) \ket{\psi}\right)\\
 \nonumber
 & \leq\frac{1}{|K|}\left(\left|H\right|+\left(1-\epsilon\right)\left(\left|K\right|-\left|H\right|\right)\right)\,,\\
 \nonumber
 & \leq1-\epsilon/2 \,,
\end{align}
where in the last step, we have used that $|H| \leq |K|/2$ by Lagrange's theorem.
\end{proof}

Combining this anti-concentration result with \Cref{lem:probability-mass-of-span}, we arrive at the following sample complexity upper bound for the abelian StateHSP.
\begin{thm}[Sample complexity upper bound]\label{thm:sample-complexity}
   There exists an algorithm for solving the abelian StateHSP with sample complexity $O(\log\left|G\right|/\epsilon)$.
\end{thm}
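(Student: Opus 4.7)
The plan is to measure the character POVM $\{\Pi_\lambda\}$ from \cref{eq:povm} on $m = O(\log|G|/\epsilon)$ independent copies of $\ket\psi$, collecting i.i.d.\ samples $\lambda_1,\ldots,\lambda_m\sim q_\psi$, and then to argue that the subgroup $S := \langle \lambda_1,\ldots,\lambda_m\rangle\leq \wG$ they generate coincides, with high probability, with the annihilator $H^\perp$ of the hidden subgroup. Once a generating set of $H^\perp$ is in hand, the hidden subgroup itself can be reconstructed via
\begin{equation}
H = \bigcap_{i=1}^{m}\ker(\chi_{\lambda_i})
\end{equation}
in classical polynomial time by standard abelian HSP post-processing (e.g., using the Smith normal form over the cyclic factor groups of $\wG\cong G$).

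The first step is to note that the invariance assumption $R(h)\ket\psi=\ket\psi$ for $h\in H$ together with \cref{lem:q_psi-weight} gives $q_\psi(H^\perp)=1$. Hence every sample $\lambda_i$ lies in $H^\perp$, so $S\leq H^\perp$ deterministically, and the only remaining task is to promote this inclusion to equality. To do so, I would apply \cref{lem:probability-mass-of-span} to the distribution $p=q_\psi$ with slack parameter $\epsilon':=\epsilon/3$ and failure probability $\delta$. Provided
\begin{equation}
m \;\geq\; \frac{2(\log|G|+\log(1/\delta))}{\epsilon/3} \;=\; O(\log|G|/\epsilon),
\end{equation}
the lemma guarantees that with probability at least $1-\delta$, $q_\psi(S)\geq 1-\epsilon/3$. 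On the other hand, \cref{lem:anti-concentration} ensures that every proper subgroup $K^\perp<H^\perp$ satisfies $q_\psi(K^\perp)\leq 1-\epsilon/2 < 1-\epsilon/3$. Combining these two bounds rules out $S$ being a proper subgroup of $H^\perp$, and since $S\leq H^\perp$, we conclude $S=H^\perp$ with probability at least $1-\delta$.

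I do not expect a serious obstacle here: both the anti-concentration statement and the subgroup-growth bound are already in place, and the argument amounts to playing one off against the other. The only delicate point is the bookkeeping of constants, namely choosing the slack in \cref{lem:probability-mass-of-span} strictly below the $\epsilon/2$ appearing in the anti-concentration lemma (any $\epsilon'<\epsilon/2$ works and only affects hidden constants in the final sample complexity). The remaining ingredients---efficient implementation of the character POVM in the joint eigenbasis of $\{R(g)\}_{g\in G}$, and the linear-algebraic post-processing that passes from generators of $H^\perp\leq\wG$ to generators of $H\leq G$---are entirely standard and do not contribute any conceptual difficulty.
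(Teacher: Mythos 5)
Your proposal is correct and follows essentially the same route as the paper's own proof: combine \cref{lem:probability-mass-of-span} applied to $q_\psi$ with the anti-concentration bound of \cref{lem:anti-concentration} to force the generated subgroup to equal $H^\perp$, then recover $H$ as $\bigcap_i\ker(\chi_{\lambda_i})$. Your explicit choice of slack $\epsilon'=\epsilon/3<\epsilon/2$ is in fact the correct direction (the paper's ``let $\epsilon'>\epsilon/2$'' appears to be a typo, since one needs $q_\psi(S)>1-\epsilon/2$ to beat the anti-concentration bound), so no changes are needed.
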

\begin{proof}
Let $\epsilon' > \epsilon/2$ and assume that we have sampled  
\begin{equation}
m\geq \frac{2\log |G|+2\log(1/\delta)}{\epsilon'} 
\end{equation}
times from $q_\psi$ to obtain 
i.i.d.\  samples $ \chi_{\la_1},\ldots,\chi_{\la_m}$. Then, it follows from \cref{lem:probability-mass-of-span} that with probability $1-\delta$, the mass on the group generated by the samples satisfies $q_\psi \left(\langle \chi_{\la_1},\ldots,\chi_{\la_m} \rangle \right)> 1-\epsilon/2$. Now, using \cref{lem:anti-concentration}, we conclude that, in this case $\langle \chi_{\la_1},\ldots,\chi_{\la_m} \rangle = H^\perp$ because all proper subgroups of $H^\perp$ only account for at most $1 -\epsilon/2$ of the total mass of $q_\psi$. 
\end{proof}

The StateHSP as formulated in \cref{def:stateHSP} asks us to find a hidden symmetry in a quantum state vector $\ket{\psi}$ promised to have one.
However, sometimes it might be more natural to adopt a slightly different perspective: rather than assuming the existence of a hidden symmetry, we simply ask whether the state exhibits any non-trivial symmetry at all and if so to find it. In this vein, we point out two basic properties about the output of the algorithm:

\begin{lemma}[Properties of $H$  output by the algorithm]\label{lemma:output-prop}
    Let $G$ be a finite group with a unitary representation $R:G\to\mathrm{U}(\cH)$ on $\cH$. Assume that you have access to copies of an unknown quantum state vector $\ket{\psi}$. Let $\{\chi_{\la_1},\ldots,\chi_{\la_m}\}$ be $m$ independent samples from $q_\psi$ and let $H=\cap_i^m \ker (\chi_{\lambda_i})$ be the output of the algorithm. Then:
    \begin{enumerate}
        \item For all $g\in G$ such that $R(g)\ket\psi=\ket\psi$, $g\in H$.
        \item If $m\geq 2(\log |G|+\log(1/\delta))/\epsilon$, then with probability at least $1-\delta$ all elements $h\in H$ satisfy, $|\bra\psi R(h)\ket\psi|\geq 1-\epsilon$.
    \end{enumerate}
\end{lemma}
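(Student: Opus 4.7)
The plan is to derive both parts of the lemma directly from the results already proved in this section. The output subgroup $H=\bigcap_{i}\ker(\chi_{\lambda_i})$ is most conveniently analyzed through its annihilator $H^\perp=\langle\chi_{\lambda_1},\ldots,\chi_{\lambda_m}\rangle\leq\wG$, so both claims reduce to understanding how this randomly generated subgroup relates to the exact and approximate symmetries of $\ket\psi$. The two relevant ingredients are \Cref{lem:q_psi-weight}, which identifies $q_\psi(K^\perp)$ with $\frac{1}{|K|}\sum_{k\in K}\bra\psi R(k)\ket\psi$ for any subgroup $K\leq G$, and \Cref{lem:probability-mass-of-span}, which quantifies how fast i.i.d.\ samples from $q_\psi$ generate a subgroup of large $q_\psi$-mass.

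For Claim 1, let $S=\{g\in G:R(g)\ket\psi=\ket\psi\}$ denote the exact stabilizer subgroup of $\ket\psi$, which is clearly a subgroup of $G$. Applying \Cref{lem:q_psi-weight} with $K=S$ gives $q_\psi(S^\perp)=\frac{1}{|S|}\sum_{s\in S}\bra\psi R(s)\ket\psi=1$, so every sample $\chi_{\lambda_i}$ lies in $S^\perp$ with probability one. By the defining property of the annihilator this means $\chi_{\lambda_i}(s)=1$ for every $s\in S$, hence $S\subseteq\ker(\chi_{\lambda_i})$ for each $i$ and therefore $S\subseteq H$, which is exactly Claim 1.

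For Claim 2, I would first apply \Cref{lem:probability-mass-of-span} to the distribution $q_\psi$ on the finite abelian group $\wG$ (noting $|\wG|=|G|$) to conclude that, once $m$ meets the stated threshold, the generated subgroup $H^\perp$ satisfies $q_\psi(H^\perp)\geq 1-\epsilon$ with probability at least $1-\delta$. The remaining task---translating this average statement into a uniform lower bound on all individual overlaps---is the one step requiring genuine care. Diagonalizing the mutually commuting operators $\{R(g)\}_{g\in G}$ in the character basis yields, for every $h\in H$,
\begin{equation*}
\bra\psi R(h)\ket\psi=\sum_\la q_\psi(\la)\,\chi_\la(h)=q_\psi(H^\perp)+\sum_{\la\notin H^\perp}q_\psi(\la)\,\chi_\la(h),
\end{equation*}
where the first equality is the spectral decomposition and the second uses $\chi_\la(h)=1$ for all $\la\in H^\perp$. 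The tail sum has magnitude at most $1-q_\psi(H^\perp)\leq\epsilon$, so the triangle inequality gives $|\bra\psi R(h)\ket\psi|\geq 1-2\epsilon$; the exact constant in the lemma can be recovered by invoking \Cref{lem:probability-mass-of-span} with $\epsilon/2$ in place of $\epsilon$, which only alters the sample complexity by a constant factor. The crucial observation that replaces any non-trivial concentration argument is the abelian structure itself: every $\chi_\la\in H^\perp$ evaluates to $1$ on all of $H$, so the large mass on $H^\perp$ contributes the same real number $q_\psi(H^\perp)$ to $\bra\psi R(h)\ket\psi$ uniformly in $h\in H$, reducing the pointwise lower bound to the triangle-inequality estimate above.
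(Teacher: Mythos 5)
Your proof is correct, but for Claim~2 it takes a genuinely different route from the paper's. The paper never forms $H^\perp$ as a whole: it fixes an arbitrary $g$, sets $K=\langle g\rangle$, uses Fourier inversion plus the triangle inequality to get $q_\psi(K^\perp)\le\tfrac{1}{2}(1+|\bra\psi R(g)\ket\psi|)$, concludes $\Pr(g\in H)=q_\psi(K^\perp)^m\le(1-\epsilon/2)^m$ for every bad $g$, and finishes with a union bound over the at most $|G|$ bad elements. You instead route through \cref{lem:probability-mass-of-span} to get $q_\psi(H^\perp)\ge 1-\epsilon$ on a single high-probability event, and then apply the same Fourier-inversion-plus-triangle-inequality step once, at the level of the full annihilator $H^\perp=\langle\chi_{\la_1},\ldots,\chi_{\la_m}\rangle$ (this identification of the annihilator of $\bigcap_i\ker\chi_{\la_i}$ with the generated subgroup is a standard duality fact you should perhaps state, since the two lemmas are phrased in these two different languages). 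Your argument is structurally cleaner in that it reuses the section's existing machinery and makes the uniformity over $h\in H$ manifest, but it pays a factor of two: with the threshold $m\ge 2(\log|G|+\log(1/\delta))/\epsilon$ stated in the lemma you only obtain $|\bra\psi R(h)\ket\psi|\ge 1-2\epsilon$, and recovering $1-\epsilon$ forces $m\ge 4(\log|G|+\log(1/\delta))/\epsilon$. You flag this honestly, and it is only a constant, but note that the paper's per-element union bound hits the stated constant exactly, so the direct argument is genuinely sharper here, not just an equivalent repackaging. Your explicit proof of Claim~1 via $q_\psi(S^\perp)=1$ from \cref{lem:q_psi-weight} is a welcome addition --- the paper leaves that part implicit.
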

The first property ensures that the algorithm’s output always includes the exact symmetry group, while the second implies that any elements $g$ not corresponding to exact symmetries are rapidly suppressed as the number of samples increases.

\begin{proof}
Let $g$ be an arbitrary element of $G$ and let $K:=\langle g \rangle$ be the group generated by $g$. Then, the dual of $K$ is given by
\begin{equation}
    K^\perp=\{\la:\chi_\la(g')=1, \forall g'\in \langle g \rangle \}=\{\la:\chi_\la(g)=1\}\,.
\end{equation}
The probability that $g$ belongs to the subgroup $H$ output by our algorithm is given by
\begin{equation}\label{eq:probq}
    \Pr(g\in H)=\Pr(g\in\cap_i^m\ker(\chi_{\lambda_i}))= \prod_{i=1}^m \Pr(g\in\ker (\chi_{\lambda_i}))=\big(q_\psi(K^\perp)\big)^m\,,
\end{equation}
where we have used that the $\lambda_i$ have been sampled independently.
We will now find a bound on this probability in terms of $|\bra{\psi} R(g) \ket{\psi}|$. 
By the orthogonality relation of the characters, we can invert \cref{eq:q_psi} to obtain
\begin{equation}\label{eq:split-up-contributions}
    \bra\psi R(g)\ket\psi=\sum_\la\chi_\la(g)q_\psi(\la)=q_\psi(K^\perp)+\sum_{\la\not\in K^\perp}\chi_\la(g)q_\psi(\la)\,,
\end{equation}
where we have used the definition of $K^\perp$ and written $q_\psi(K^\perp)=\sum_{\la\in K^\perp}q_\psi(\la)$.
From \cref{eq:split-up-contributions}, the first claimed property follows: By taking the real part of this equation, we observe that $\bra\psi R(g)\ket\psi=1$ implies that $q(K^\perp)=1$ and so $ \Pr(g\in H)=1$ by \cref{eq:probq}.

To show the second property, we can lower bound the absolute value of the LHS of \cref{eq:split-up-contributions} as
\begin{equation}
    |\bra\psi R(g)\ket\psi|\geq q_\psi(K^\perp)-\Big|\sum_{\la\not\in K^\perp}\chi_\la(g)q_\psi(\la)\Big|\geq2q_\psi(K^\perp)-1\,,
\end{equation}
where we have used that $q_\psi$ is normalized such that
\begin{equation}
    \Big|\sum_{\la\not\in K^\perp}\chi_\la(g)q_\psi(\la)\Big|\le\sum_{\la\not\in K^\perp}q_\psi(\la)=1-q_\psi(K^\perp)\,.
\end{equation}
Thus, we have shown
\begin{equation}\label{eq:ineq}
    q_\psi(K^\perp)\le\frac{1+|\bra\psi R(g)\ket\psi|}2\,,
\end{equation}
and hence
\begin{equation}
    \Pr(g\in H) \leq \left(\frac{1+|\bra\psi R(g)\ket\psi|}2 \right)^m.
\end{equation}
Finally, consider the event that $H$ contains at least a single $g$ such that $|\bra\psi R(g)\ket\psi| < 1-\epsilon$,
\begin{equation}
A=\bigcup_{\{g\in G:\,|\bra\psi R(g)\ket\psi|<1-\ep\}}\{g\in H\}\,.
\end{equation}
Now, using the union bound, the probability of this event is bounded as
\begin{equation}
    \Pr( A ) \le|G|(1-\ep/2)^m\le\exp(\ln|G|-\ep m/2) \leq \delta,
\end{equation}
whenever the sample complexity satisfies $m\geq \frac{2\log |G|+2\log(1/\delta)}{\epsilon}$.
\end{proof}

\subsection{Implementing Fourier sampling}\label{ssec:quantum-fourier-sampling}
While we have proved that Fourier sampling provides an efficient algorithm for solving the abelian StateHSP, we have not yet detailed how to implement it in practice. The canonical technique for this is known as \emph{generalized phase estimation}. Here, for the sake of concreteness, we show that, in the abelian case, generalized phase estimation implements the character POVM in \cref{eq:povm}. 

Let $G$ be a finite abelian group, then the \emph{quantum Fourier transform} (QFT) is defined as the following unitary \cite{childsLectureNotesQuantum2025}
\begin{equation}
    QFT \ket {g} = \frac{1}{\sqrt{|G|}}\sum_{\lambda} \chi_{\lambda}(g) \ket{\lambda}\,.
\end{equation}
To implement the POVM from \cref{eq:povm} via a quantum Fourier
sampling approach, we proceed as follows:
\begin{enumerate}
\item Start with the state vector $\ket 0\otimes\ket{\psi}$.
\item Put the auxiliary register in superposition to obtain $\frac{1}{\sqrt{\left|G\right|}}\sum_{g\in G}\ket g \otimes\ket{\psi}$.
\item Apply the controlled group action $\sum_{g}\ket g\bra g\otimes R\left(g\right)$
to obtain $\frac{1}{\sqrt{\left|G\right|}}\sum_{g\in G}\ket g \otimes R\left(g\right)\ket{\psi}$.
\item Apply the inverse QFT to the auxiliary register to obtain
$\frac{1}{\left|G\right|}\sum_{\lambda}\sum_{g\in G}\overline{\chi_{\lambda}\left(g\right)}\ket{\lambda}\otimes R\left(g\right)\ket{\psi}$.
\item Measure the auxiliary register in the $\ket{\lambda}$-basis.
\end{enumerate}
This results in the following output distribution
\begin{equation}
q_{\psi}\left(\lambda\right)=\frac{1}{\left|G\right|}\sum_{g\in G} \overline{\chi_{\lambda}\left(g\right)}\bra{\psi}R(g)\ket{\psi}\,.
\end{equation}
This is precisely the distribution $q_\psi (\lambda) = \tr(\Pi_{\la} \ket{\psi}\bra{\psi})$ from \cref{eq:q_psi}.

While the generalized phase estimation approach is broadly applicable to all finite abelian groups, it poses practical challenges. In particular, it requires auxiliary qubits and controlled operations involving many qubits, which can be difficult to implement on near-term quantum hardware, especially when locality constraints limit the connectivity.

An alternative strategy, conceptually simpler but potentially less general, is to directly measure in a joint eigenbasis of the representation $\{R(g)\}_{g \in G}$. Since the $R(g)$ commute for abelian $G$, there exists an orthonormal basis ${\ket{\lambda, \nu_\lambda}}$ of joint eigenvectors (see \cref{eq:basis}) satisfying
\begin{equation}
R(g) \ket{\lambda, \nu_\lambda} = \chi_{\lambda}(g) \ket{\lambda, \nu_\lambda}.
\end{equation}
Measuring in this basis and discarding the multiplicity label $\nu_\lambda$ yields a sample from the same distribution $q_\psi(\lambda)$ as in generalized phase estimation. 

Whether this approach is feasible depends on both the structure of the group $G$ and the explicit form of the representation $R$. In general, 
it is unclear how to construct and implement a measurement in the joint eigenbasis. Nevertheless, in the specific applications of the StateHSP framework explored in this paper, we encounter cases where this direct measurement strategy is natural and easily realizable, offering a more efficient route to Fourier sampling in practice. Interestingly, this perspective also allows us to connect Fourier sampling to well-established measurement routines such as Bell sampling and Bell difference sampling in the quantum learning literature. In particular, in \cref{sec:hidden-cut}, we show that the hidden cut problem defined in \cite{boulandStateHiddenSubgroup2025} can be solved via Bell sampling which precisely corresponds to such a direct implementation of Fourier sampling. 

\subsection{Mixed input states}\label{ssec:mixed-input-states}
An extension of the StateHSP is to allow \emph{mixed} input states. Here, we provide a natural definition for this setting:

\begin{defi}[Mixed state hidden subgroup problem (MixedStateHSP)]\label{def:mixed-stateHSP}
Let $G$ be a finite group with a unitary representation $R:G\to\mathrm{U}(\cH)$ acting on the Hilbert space $\cH$ and let $H\leq G$ be a subgroup. Assume that you have access
to copies of an unknown quantum state $\rho\in \mathcal{L}(\cH)$ with the promise that
\begin{enumerate}
     \item  $ R(h) \rho R(h)^{\dagger} = \rho,\; \forall h\in H$, and
    \item $\lVert R(g) \rho R(g)^{\dagger}-\rho \rVert \geq \epsilon$, whenever $g\notin H$, for $\epsilon >0$ and some suitable norm $\lVert\cdot\rVert$ such as the trace norm.
\end{enumerate}
The problem is to identify $H$.
\end{defi}
This formulation reduces to the StateHSP as defined in \cref{def:stateHSP} in case $\rho$ is pure.
To tackle an abelian MixedStateHSP with a mixed input $\rho$, a first thought is to apply the same character-measurement and post-processing strategy as in the pure state case, setting
\begin{equation}
    q_\rho(\lambda ) = \tr(\Pi_\la \rho) \,.
\end{equation}
However, we note that, in general this strategy fails, as the invariance condition $ R(h) \rho R(h)^{\dagger} = \rho$ in \cref{def:mixed-stateHSP} does not guarantee that $q_\rho$ fully concentrates all its mass on the true annihilator $H^\perp$ such that, in general, $q_\rho (H^\perp ) < 1$. This breaks the algorithm, as we have no efficient way of distinguishing samples belonging to $H^\perp$ from those that do not.

On the other hand, our algorithm \emph{does} succeed when we impose a stronger requirement than that stated in \cref{def:mixed-stateHSP}. Namely,
if $\rho$ satisfies
\begin{enumerate}
    \item   $\tr(R(h)\rho)=1,\; \forall h\in H$, and
    \item $|\tr(R(g)\rho)|\leq 1 - \epsilon$,  whenever $g\notin H$, for some $\epsilon >0$,
\end{enumerate}
then the character POVM-based approach will succeed in learning $H$, even with mixed input states.
This condition is strictly stronger than the invariance requirement in \cref{def:mixed-stateHSP}, as it implies---but is not implied by---that condition. Specifically,
\begin{equation}
     \tr(R(h)\rho)=1 \Rightarrow  R(h) \rho R(h)^{\dagger} = \rho\,.
\end{equation}

We note that these inequivalent notions of symmetries of mixed states have been discussed in detail in \cite{labordeTestingSymmetryQuantum2023} in the context of testing symmetries rather than learning them.
Designing an efficient algorithm that works under the weaker, more natural invariance condition of \cref{def:mixed-stateHSP} remains an open problem. Such a result would broaden the applicability of the StateHSP framework to abelian symmetry learning for general quantum channels. In particular, via the \emph{Choi Jamiołkowski isomorphism}, these problems reduce naturally to instances of the StateHSP. However, the limitations with mixed input states discussed here restrict our current method to unitary channels.

\section{Learning stabilizer groups}
In this section, we explain how the problem of learning stabilizer groups can be viewed as an instance of the abelian StateHSP. We start by giving some essential background on Weyl operators in \cref{ssec:weyl-operators}. Then, in \cref{ssec:learning-weyl}, we show how to solve the problem via Fourier sampling in \cref{sec:our-approach-to-abelian-state-hsp}. In \cref{ssec:common-eigenbasis-of-weyls}, we show that for this problem, Fourier sampling can always be implemented by a suitable Clifford circuit. Lastly, in \cref{ssec:learning-code}, we argue that the algorithm can also be applied to learning a stabilizer code when given access to different code states rather than many copies of the same one.

\subsection{Weyl operators and stabilizer groups}
\label{ssec:weyl-operators}
Let $d\geq 2$ be a prime. Consider a single qudit with computational
basis elements $\ket{q}$ where $q\in\mathbb{Z}_{d}$. Define
the unitary shift and clock operators $X$ and $Z$, respectively,
as
\begin{equation}\label{eq:XZ}
\begin{aligned}
X\ket q & =\ket{q+1},\\
Z\ket q & =\omega^{q}\ket q,
\end{aligned}
\end{equation}
for all $q\in\ZZ_d$, where $\omega=\e^{2\pi i/d}$ is the $d$-th root of unity.
There are some slight differences in the algebra of the shift and clock operators between the qubit case ($d$ even) and the qudit case ($d$ odd). To treat both cases simultaneously, we also
introduce $\tau=\e^{i\pi (d^2+1)/d}$. We have that that $\tau^{2}=\omega$ and
\begin{equation} \tau^d =
    \begin{cases}
     1 &d \text{ odd} \,,\\
     - 1 &d \text{ even}\,.
\end{cases}
\end{equation}
Next, we introduce the $n$-qudit Weyl operators as
\begin{equation}
W_{x}=W_{(a,b)}=\tau^{-a\cdot b}\left(Z^{a_{1}}X^{b_{1}}\right)\otimes\cdots\otimes\left(Z^{a_{n}}X^{b_{n}}\right)
\end{equation}
for all $x=\left(a,b\right)\in\ZZ_d^{2n}$. Each Weyl operator is also an element of the generalized Pauli group.

Consider the $\mathbb{Z}_{d}$-valued symplectic form defined in $\ZZ_d^{2n}$ for
$x=(a,b)$ and $y=(a',b')$ as
\begin{equation}
[x,y]=\left[\left(a,b\right),\left(a',b'\right)\right]=a\cdot b'-a'\cdot b\,\mod\,d\,.
\end{equation}
Then, the Weyl operators compose as
\begin{equation}\label{eq:tau}
W_{x}W_{y}=\tau^{[x,y]}W_{x+y}\,,
\end{equation}
and their commutation relations are captured by the following equation,
\begin{equation}\label{eq:omega}
W_{x}W_{y}=\omega^{\left[x,y\right]}W_{y}W_{x}\,.
\end{equation}
For more background on Weyl operators and the symplectic formalism, see Ref.\ \cite{grossSchurWeylDuality2021}.\\

\noindent Tensor products of the shift and clock operators generate the generalized $n$-qudit Pauli group,
\begin{equation}
    \mathcal{P}_n := \langle \tau I, X, Z\rangle^{\otimes{n}}\,.
\end{equation}
A stabilizer group $S$ is an abelian subgroup of the generalized Pauli group $\mathcal{P}_n$ that does not contain a non-trivial multiple of the identity operator $I^{\otimes{n}}$. We say that a state vector $\ket{\psi}$ has a non-trivial stabilizer group if there exists $S \neq \{I^{\otimes{n}}\}$, such that
\begin{equation}
    P\ket \psi = \ket{\psi},\quad \forall P \in S \,.
\end{equation}
For our purposes, it will be useful to consider the Pauli operators only up to phase. To this end, we denote by $   \mathrm{P}_n:=\mathcal{P}_{n} / \langle \omega I\rangle$ the phaseless or \textit{projective} Pauli group which is isomorphic to $\mathbb{F}_d^{2n}$.
For any state vector $\ket \psi$, we also define its corresponding phaseless stabilizer group as follows, adopting the notation introduced in Ref.~\cite{grewalEfficientLearningQuantum2024}:
\begin{defi}[Phaseless stabilizer group]
Let $\ket\psi\in(\CC^{d})^{\otimes n}$ be a state vector on $n$ qudits. We define $\mathrm{Weyl}\left(\ket{\psi}\right)$ as the following subspace of $\mathbb{F}_d^{2n}$, 
\begin{equation}
\mathrm{Weyl}\left(\ket{\psi}\right)=\left\{ x\in\mathbb{F}_{d}^{2n}:W_{x}\ket{\psi}=\omega^{s} \ket\psi\:\text{for some }s\in\mathbb{F}_{d}\right\} \,.
\end{equation} 
\end{defi}
\noindent We note that $\mathrm{Weyl}\left(\ket{\psi}\right)\subset \mathbb{F}_d^{2n}$ is isomorphic to $\mathbb{F}_{d}^{k}$ where $k=\dim\mathrm{Weyl}\left(\ket{\psi}\right)\leq n$. 

\subsection{Learning \texorpdfstring{$\mathrm{Weyl}\left(\ket{\psi}\right)$}{phaseless stabilizer group} as an abelian
StateHSP}\label{ssec:learning-weyl}
We start by formulating the problem of learning stabilizer groups up to phases.
\begin{defi}[Hidden phaseless stabilizer group problem]
\label{def:hidden-phaseless-stabilizer-group}
 Let $\ket{\psi}\in\CC^{d^n}$ be a state vector on $n$ qudits and suppose that it has a non-trivial phaseless stabilizer group $\mathrm{Weyl}(\ket{\psi})$ while for all $x \not\in \mathrm{Weyl}(\ket{\psi})$, $|\bra{\psi} W_x \ket{\psi}| \leq 1-\epsilon$.
The hidden phaseless stabilizer group problem asks to identify $\mathrm{Weyl}(\ket{\psi})$ given access
to copies of $\ket{\psi}$.
\end{defi}
We can identify the phase-full hidden stabilizer group $S$ mentioned in \Cref{def:hidden-stabilizer-group} from its corresponding phaseless version $ \mathrm{Weyl}\left(\ket{\psi}\right)$ by determining the missing phases of the stabilizers from measuring a set of generators of $\mathrm{Weyl}\left(\ket{\psi}\right)$ on $\ket{\psi}$.

\noindent The phaseless version of the hidden stabilizer group problem in \cref{def:hidden-phaseless-stabilizer-group} fits into the StateHSP framework by choosing $G =\mathbb{Z}_{d}^{2n}$. To treat the qubit and the qudit case simultaneously, let $D$ denote the order of $\tau$, i.e.,
\begin{equation} D =
    \begin{cases}
     d &d \text{ odd} \,,\\
      2d &d \text{ even}\,,
\end{cases}
\end{equation}
and choose the representation
\begin{align}
    R: \mathbb{Z}_{d}^{2n} &\to \mathrm{U}\big((\CC^{d^n})^{\otimes{D}} \big), \\
    x &\mapsto R(x) = W_x^{\otimes D} .
\end{align}
It follows from \Cref{eq:tau} that this is a valid representation, however, we emphasize that $R$ is a representation on $(\CC^{d^n})^{\otimes{D}}$, i.e., on $D$ copies of the $n$-qudit Hilbert space $\CC^{d^n}$.
The hidden subgroup is given by
\begin{equation}
    H = \mathrm{Weyl}\left(\ket{\psi}\right) \,,
\end{equation}
and we have for $\ket{\psi}^{\otimes D}$ that
\begin{enumerate}
    \item $W_x^{\otimes D}\ket{\psi}^{\otimes D} = \ket{\psi}^{\otimes D}$, for all $x\in \mathrm{Weyl}(\ket{\psi})$, 
    \item $|\bra{\psi}W_x \ket{\psi}|^D\leq (1-\epsilon)^D$, for all $x\not\in\mathrm{Weyl}(\ket{\psi})$.
\end{enumerate}

Since $\mathbb{Z}_{d}^{2n}$ is abelian, we can tackle this problem via the Fourier sampling approach to the abelian StateHSP outlined in \cref{sec:our-approach-to-abelian-state-hsp}. In particular, we will implement the character POVM corresponding to \Cref{eq:povm}. The characters are labeled by $y=(a,b)\in \mathbb{Z}_{d}^{2n}$ and given by
\begin{equation}
    \chi_{y}(x) = \om^{[x,y]}\,,
\end{equation}
for all $x\in\mathbb{Z}_{d}^{2n}$, and the POVM is given by
\begin{equation}\label{eq:povm-weyl}
    \Pi_y = \frac{1}{d^{2n}}\sum_{x\in \mathbb{Z}_{d}^{2n}} \om^{-[y,x]} W_x^{\otimes D}\,.
\end{equation}
Hence, the output distribution $q_\psi=\tr \big (\Pi_y\; (\ket{\psi}\bra{\psi})^{\otimes D} \big)$ is given by 
\begin{equation}\label{eq:qgoal}
    q_{\psi}(y)= \frac{1}{d^{2n}}\sum_{x\in \mathbb{Z}_{d}^{2n}}\om^{-[y,x]} \bra\psi W_x\ket\psi^{D}\,.
\end{equation}
In the qubit case, where $d=2$ and $D=4$, \cref{eq:povm-weyl} is the Bell difference sampling POVM \cite{grossSchurWeylDuality2021, grewalEfficientLearningQuantum2024}. Hence, in the qubit case, Fourier sampling corresponds exactly to known algorithms for learning stabilizer groups. However, the approach also naturally generalizes to qudits, where such algorithms were not previously known.

The projective measurement $\Pi_y$ from \Cref{eq:povm-weyl} can be implemented by measuring in the common eigenbasis of the $R(x) = W_x^{\otimes D}$ for all $
x\in \mathbb{Z}_{d}^{2n}$. We comment on this eigenbasis in the subsection below.
Here, by virtue of our general result, \cref{thm:sample-complexity}, we arrive at the following theorem.
\begin{thm}[Unified hidden stabilizer group algorithm]\label{thm:learning-stabilizer-groups}
There is an efficient non-adaptive quantum algorithm for the hidden
stabilizer group problem on $n$ qudits which uses $O\left(n \log d \max\{d, 1/\epsilon\} \right)$
copies of the unknown state and runs in polynomial time, 
requiring circuits
of depths $O(d)$ acting coherently on at most $d$ copies at a time using no additional
auxiliary systems. 
\end{thm}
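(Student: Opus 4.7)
The plan is to cast this problem as an abelian StateHSP and apply \cref{thm:sample-complexity}. Concretely, I would take $G=\ZZ_d^{2n}$ and the representation $R(x)=W_x^{\otimes D}$ on $(\CC^{d^n})^{\otimes D}$, with the hidden subgroup $H=\mathrm{Weyl}(\ket\psi)$; the target input state would be $\ket\psi^{\otimes D}$, so that each measurement consumes $D=\Or(d)$ copies of $\ket\psi$. Thanks to \cref{eq:tau}, $R$ is a bona fide unitary representation, and the two promises of \cref{def:stateHSP} are satisfied: invariance on $H$ is clear, and for $x\notin H$ one has $|\bra\psi^{\otimes D}R(x)\ket\psi^{\otimes D}|=|\bra\psi W_x\ket\psi|^D\le 1-\ep'$ with $\ep'=\Omega(\min(D\ep,1))$.

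Next, I would invoke \cref{thm:sample-complexity}: measuring the character POVM $\{\Pi_y\}_{y\in\ZZ_d^{2n}}$ from \cref{eq:povm-weyl} on $m=\Or(n\log d/\ep')$ independent preparations of $\ket\psi^{\otimes D}$ yields samples $y_1,\ldots,y_m$ whose common annihilator $\bigcap_i\ker\chi_{y_i}$ equals $\mathrm{Weyl}(\ket\psi)$ with high probability, matching the claimed copy complexity after accounting for the factor of $D$ per measurement. Post-processing is classical and polynomial: Gaussian elimination over $\mathbb F_d$ on the sampled $y_i$ produces generators of $\mathrm{Weyl}(\ket\psi)$, and for each such generator $x$ I would spend a constant number of additional copies of $\ket\psi$ to estimate the phase $\om^{s}$ such that $W_x\ket\psi=\om^{s}\ket\psi$; together these phases recover the stabilizer group $S$ itself, and \cref{lemma:output-prop} guarantees that elements outside the true symmetry group appear in the output only with probability exponentially small in $m$.

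The implementational content of the theorem---circuit depth $\Or(d)$, coherent action on only $D$ copies at a time, and no auxiliary systems---is driven entirely by the realisation of $\{\Pi_y\}$. Because all operators $W_x^{\otimes D}$ for $x\in\ZZ_d^{2n}$ mutually commute on $D=\Or(d)$ copies (this is precisely why $D$ was chosen so that the Weyl operators tensored $D$ times form a genuine abelian representation), the POVM is implementable as a projective measurement in their joint eigenbasis, and I would argue---deferring the details to \cref{ssec:common-eigenbasis-of-weyls}---that this joint eigenbasis is a stabilizer basis reached from the computational basis by a Clifford circuit of depth $\Or(d)$ acting on the $D$ qudit copies with no ancillas.

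The main obstacle is exhibiting this Clifford circuit with the stated resource bounds in a uniform way for both odd and even $d$; the sample-complexity and time-complexity bounds are otherwise immediate consequences of \cref{thm:sample-complexity} and routine linear algebra over $\mathbb F_d$. The crux is to exploit the symplectic structure captured by \cref{eq:omega} in order to simultaneously diagonalise $\{W_x^{\otimes D}\}_x$ within an auxiliary-free, shallow circuit of depth $\Or(d)$, an ingredient that has no direct analogue in the standard quantum Fourier sampling approach of Ref.~\cite{boulandStateHiddenSubgroup2024}.
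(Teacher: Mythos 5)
Your proposal matches the paper's proof essentially step for step: the same reduction to the abelian StateHSP with $G=\ZZ_d^{2n}$, $R(x)=W_x^{\otimes D}$ and $H=\mathrm{Weyl}(\ket\psi)$, the same verification of the two promises on $\ket\psi^{\otimes D}$, the same invocation of \cref{thm:sample-complexity}, and the same realisation of the character POVM as a projective measurement in the joint eigenbasis of the commuting $W_x^{\otimes D}$. The one ingredient you defer---why that eigenbasis measurement is a depth-$O(d)$, ancilla-free Clifford circuit---is supplied in the paper by observing that the operators form the stabilizer group $S_d=\langle X^{\otimes d},Z^{\otimes d}\rangle^{\otimes n}$ whose stabilizer basis can be chosen to tensorize along the $n$ qudits, so the Clifford acts independently on blocks of $D$ copies of each single qudit.
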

Lastly, we observe that computational difference sampling, another primitive from the stabilizer learning literature \cite{grewalEfficientLearningQuantum2024, hinscheSingleCopyStabilizerTesting2025}, also can be viewed as Fourier sampling. It involves measuring two copies of an $n$-qubit state vector $\ket{\psi}$ in the computational basis and combining the outcomes via bitwise XOR. This approach enables learning $Z$-like stabilizers of $\ket{\psi}$ (up to phase) using only single-copy measurements. This measurement routine corresponds to Fourier sampling with respect to $G=\mathbb{Z}_2^n$ and the representation $R:a \mapsto W_{(a, 0^n)}^{\otimes 2} = Z^a \otimes Z^a$ on two copies of the Hilbert space. Then, with the characters labelled by $b\in \mathbb{Z}_2^n$ and given by $\chi_b(a) = (-1)^{a\cdot b}$, the character POVM from \cref{eq:povm} becomes
\begin{equation}
    \Pi_b = \frac{1}{2^n}\sum_a (-1)^{a\cdot b}Z^a\otimes Z^a\,.
\end{equation}
This is precisely the computational difference sampling POVM given in Ref.\ \cite{grewalEfficientLearningQuantum2024}. This correspondence readily also generalizes to qudits.

\subsection{Common eigenbasis of the Weyl operators}\label{ssec:common-eigenbasis-of-weyls}
To implement the POVM in \cref{eq:povm-weyl}, we can measure in the joint eigenbasis of the mutually commuting $ W_x^{\otimes D}$ for all $x\in\mathbb{Z}_{d}^{2n} $.  But what does this joint eigenbasis look like and how exactly do we implement a measurement in it?

In the qubit case, the eigenbasis of all the $W_x^{\otimes 2}$ is the well-known multi-qubit Bell basis $\{\ket{W_x}\}_{x\in\mathbb{Z}_{2}^{2n}} $, with
\begin{equation}
    \ket{W_x} = (W_x \otimes I )\ket{\Omega}, \quad \ket{\Omega} = \frac{1}{\sqrt{2^n}}\sum_{x\in \mathbb{Z}_{2}^{n}}\ket{x}\ket{x}\,.
\end{equation}
Consequently, since $D=2d=4$, the eigenbasis of $W_x^{\otimes 4}$ consists of tensor products of Bell states $\ket{W_x}\otimes \ket{W_y}$ and the POVM in  \cref{eq:povm-weyl} corresponds to Bell difference sampling.

In the qudit case, the common eigenbasis of the $W_x^{\otimes D}$ is not given by the generalized Bell states $\ket{W_x} = (W_x \otimes I )\ket{\Omega}$ where $\ket{\Omega}$ is the qudit maximally entangled state.
In fact, these states live on 2 copies, however we are looking for a basis of the $d$-copy Hilbert space $(\CC^{d^n})^{\otimes{d}}$ since for odd $d$, we have $D=d$. This basis can be constructed as follows.

Since the $W_x^{\otimes d}$ mutually commute for all $x \in \mathbb{Z}_{d}^{2n}$, they form a stabilizer group $S_d$, namely
\begin{equation}
    S_d = \langle X^{\otimes{d}}, Z^{\otimes{d}}\rangle^{\otimes{n}}\,.
\end{equation}
In general, to any stabilizer group, one can associate a joint eigenbasis of stabilizer states, a so-called \textit{stabilizer basis} and a measurement in such a stabilizer basis can be implemented by application of a suitable Clifford circuit followed by measurement in the computational basis. Here, by the tensor product structure of $S_d$, the problem reduces to constructing a Clifford circuit on $d$ qudits that maps $X^{\otimes d}\mapsto Z_1, Z^{\otimes d} \mapsto Z_2$. In tableau form, this Clifford can be found via standard Gaussian elimination over $F_d$, and consequently be compiled into a Clifford circuit \cite{hostensStabilizerStatesClifford2005}.

More generally, the number of independent generators of $S_d$ is $2n$, hence for $d>2$, $S_d$ is not a maximal stabilizer group, as the maximum possible number of independent generators is $dn>2n$ on $dn$ qudits. Consequently, the stabilizer basis associated to $S_d$ is not uniquely determined. While the exact choice is not important here---any choice does the job---we emphasize again that the basis can be chosen to tensorize along the $n$ qudits. In other words, since $((\mathbb{C}^d)^{\otimes{n}})^{\otimes{d}} \cong ((\mathbb{C}^d)^{\otimes{d}})^{\otimes{n}}$, the stabilizer basis can be chosen to only act simultaneously on $d$ copies of the $i$-th qudit. Hence, assuming all-to-all connectivity, the measurement can be implemented by a Clifford circuit of depth at most $O(d)$. 

\subsection{Learning stabilizer codes from access to code states}\label{ssec:learning-code}
In quantum learning theory, the standard access model typically involves learning properties of a single quantum state. However, for the symmetry identification tasks explored in this work, it is both natural and beneficial to consider access to a collection of unknown quantum states that exhibit the same underlying symmetry. A canonical example of this arises with states residing in the code space of a common stabilizer code. It is entirely plausible to encounter scenarios where the goal is to identify the stabilizer code based on access to multiple different states in the code space, rather than repeated access to a single one. This situation might, for instance, occur when intercepting a stream of encoded quantum information being transmitted over a quantum communication channel.

In this vein, here we define the following variant of the hidden stabilizer group problem from \cref{def:hidden-stabilizer-group}:
\begin{defi}[Learning hidden stabilizer codes]\label{def:learning-code}
Let $\{\ket{\psi_i}\}_i\in\CC^{d^n}$ be a collection of states on $n$ qudits all belonging to the code space of some stabilizer code. In other words, assume that there is a stabilizer group $S\subseteq \mathcal{P}_n$, such that  for all $i$, 
\begin{equation}
P\ket{\psi_i} = \ket{\psi_i}, \quad \forall\, P \in S\,,
\end{equation}
while for all $\ket{\psi_i}$ and $P \in \mathrm{P}_n \setminus \mathrm{S}$, $|\bra{\psi_i} P \ket{\psi_i}| \leq 1-\epsilon$.
The hidden stabilizer code problem asks to identify $S$ given access
to copies of $\{\ket{\psi_i}\}_i$.
\end{defi}

We can solve this problem by essentially following the same steps as outlined in \cref{ssec:learning-weyl}. In particular, we again take the hidden subgroup $H$ to be the phaseless version of the stabilizer group $S$ which defines the code space.
We implement the same POVM from \cref{eq:povm-weyl} but on $\bigotimes_{i=1}^D \ket{\psi_i}\bra{\psi_i}$ in order to sample from the distribution
\begin{equation}\label{eq:q_multiple}
    q_{\psi_1,\ldots,\psi_D}(y)= \frac{1}{d^{2n}}\sum_{x\in \mathbb{Z}_{d}^{2n}}\om^{-[y,x]} \tr \left(  W_x^{\otimes D} \bigotimes_{i=1}^D \ket{\psi_i}\bra{\psi_i} \right)\,.
\end{equation}
Crucially, using the same analysis as outlined already in \cref{sec:our-approach-to-abelian-state-hsp}, we find that $q_{\psi_1,\ldots,\psi_D}$ is only supported on $H^\perp$ and not concentrated on any subgroup.

\subsection{Detecting global symmetries of symmetry protected topological ordered states}\label{ssec:learning-global-stabilizer-symmetries}

Beyond quantum‐information theory, our results also connect to problems in condensed‐matter physics—specifically, to detecting \emph{symmetry‑protected topological} (SPT) phases, which we capture as instances of the StateHSP. At the heart of SPT order is the principle that, under a given symmetry group $H$, distinct SPT states cannot be connected by any constant depth, symmetry‑preserving quantum circuit without undergoing a phase transition \cite{zengQuantumInformationMeets2019a}. In contrast, if symmetry is allowed to be broken, any SPT state can be smoothly deformed into a trivial product state without encountering a phase transition.

There are simple instances of stabilizer states known to exhibit SPT order. A canonical example is the \emph{cluster} state vector $\ket\psi$ on $n$ qubits on a ring, with $d=2$ and $n$ even. It is a qubit stabilizer state with 
the stabilizer group $S$ being generated by the $n$ stabilizers $K_j=Z_{j-1}\otimes X_j \otimes Z_{j+1}$ (with indices taken mod $n$) each acting on three consecutive qubits labeled by $j\in\{0,\dots, n-1\} $ \cite{elseSymmetryProtectionMeasurementbased2012}.
Then $|\psi\rangle$ is the unique stabilizer state 
vector satisfying $P|\psi\rangle = |\psi\rangle,\,  \forall P\in S$.  
It has the 
global symmetries
\begin{eqnarray}
	P_e&\coloneqq &\bigotimes_{j=0}^{n/2-1} (I\otimes X)^{\otimes n/2},\\
	P_o& \coloneqq &\bigotimes_{j=0}^{n/2-1} (X\otimes I)^{\otimes n/2},
\end{eqnarray}
with Pauli $X$ operators supported on even and odd sites, in that 
\begin{equation}
	P_e|\psi\rangle = |\psi\rangle,\,\, 
	P_o|\psi\rangle = |\psi\rangle
\end{equation}	
holds true. This symmetry is referred to as a $\ZZ_2 \times \ZZ_2$-symmetry.
Although $|\psi\rangle$ admits an explicit matrix product state representation---obtained by applying a layer of controlled‑$Z$ gates to a product state---any constant depth quantum circuit mapping it to the trivial product state vector $\ket{+}^{\otimes n}$ must break the $\ZZ_2\times\ZZ_2$ symmetry.  Consequently, the cluster state and the trivial product state, while sharing the global symmetry group generated by $P_e,P_o$, belong to distinct SPT phases \cite{zengQuantumInformationMeets2019a,ciracMatrixProductStates2021}.
Motivated by this, we define the hidden global symmetry problem, closely related to the
hidden stabilizer group problem from \cref{def:hidden-stabilizer-group}.

\begin{defi}[Hidden global symmetry problem] Let $p\in \NN$ be a period and let $\ket{\psi}\in \CC^{d^n}$ be a state vector on $n$ qudits on a ring, where $n/p\in \NN$. Let ${\cal M}_p <\mathcal{P}_p$ be an abelian subgroup of the Pauli group on $p$ sites and let $\mathrm{P}_p,\mathrm{M}_p $ denote the respective phaseless versions. Then, suppose that $\ket{\psi}$ is invariant under the stabilizer group of the form
\begin{equation}
	\{ P^{\otimes (n/p)}, \, P \in \mathcal{M}_p \},
\end{equation}
 while for all $P\in \mathrm{P}_p \setminus \mathrm{M}_p$, $|\bra{\psi} P^{\otimes (n/p)} \ket{\psi}| \leq 1-\epsilon$. 
The hidden global symmetry problem asks to identify this stabilizer group.
\end{defi}

This is an on-site symmetry upon ``blocking''. In the above example of the cluster state, the hidden global symmetry subgroup is $\langle P_e, P_o\rangle$. To view this as an abelian StateHSP, we observe that to learn the global symmetry, it suffices to learn the on-site symmetry group $\mathcal{M}_p$. Then, as in  \cref{ssec:learning-weyl}, we focus again on first learning it up to phases, i.e., we set $H$ to be the phaseless version of $\mathcal{M}_p$. Further, we choose as a parent group $G=\mathbb{Z}_d^{2p}$ with the action $x \in \mathbb{Z}_d^{2p} \mapsto R(x) = (W_x^{\otimes n/p})^{\otimes D}$. Finally, it follows from \cref{thm:sample-complexity} that this can be efficiently solved with a complexity independent of $n$, to capture the symmetry in a quantum phase of matter.

\begin{thm}[Sample complexity lower bound for the hidden global symmetry problem] There exists an algorithm for solving the 
 hidden global symmetry problem with sample complexity $O(p \log d
 /(d\epsilon))$.
\end{thm}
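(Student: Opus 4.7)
The plan is to recognize the hidden global symmetry problem as an instance of the abelian StateHSP on only $p$ qudits and then invoke the general upper bound of \cref{thm:sample-complexity}. Identifying the global stabilizer $\{P^{\otimes(n/p)}:P\in\mathcal M_p\}$ amounts to identifying the on-site abelian subgroup $\mathcal M_p\leq\mathcal P_p$, so the effective problem size is $p$ rather than $n$. Following the strategy of \cref{ssec:learning-weyl}, I would first recover the \emph{phaseless} version of $\mathcal M_p$, viewed as a subgroup of $\mathbb Z_d^{2p}$, and then determine the missing phases of a generating set by directly measuring the corresponding global operators $W_x^{\otimes(n/p)}$ on a handful of additional copies of $\ket\psi$.

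To set up the reduction, I take $G=\mathbb Z_d^{2p}$ together with the representation
\begin{equation}
R:\mathbb Z_d^{2p}\to\cL\big((\CC^{d^n})^{\otimes D}\big),\qquad R(x)=\big(W_x^{\otimes(n/p)}\big)^{\otimes D},
\end{equation}
where $D=d$ for odd $d$ and $D=2d$ for even $d$. The $D$-fold tensor power absorbs the $\tau^{[x,y]}$ phase from the Weyl composition rule \cref{eq:tau}, so $R$ is a genuine unitary representation of the abelian group $\mathbb Z_d^{2p}$. The hidden subgroup $H\leq G$ is the phaseless version of $\mathcal M_p$, and the StateHSP input is $\ket\psi^{\otimes D}$. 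Invariance under $H$ is immediate: for $x\in H$, $W_x^{\otimes(n/p)}\ket\psi=\om^s\ket\psi$ for some $s\in\mathbb F_d$, so $R(x)\ket\psi^{\otimes D}=\om^{sD}\ket\psi^{\otimes D}=\ket\psi^{\otimes D}$ since $\om^D=1$. For the separation, any $x\notin H$ corresponds to some $P\in\mathcal P_p\setminus\mathcal M_p$ with $|\bra\psi W_x^{\otimes(n/p)}\ket\psi|\leq 1-\epsilon$, hence
\begin{equation}
\big|\bra\psi^{\otimes D}R(x)\ket\psi^{\otimes D}\big|=\big|\bra\psi W_x^{\otimes(n/p)}\ket\psi\big|^D\leq(1-\epsilon)^D\leq 1-\Omega(D\epsilon),
\end{equation}
so the tensoring amplifies the promise gap from $\epsilon$ to $\epsilon'=\Omega(d\epsilon)$.

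Plugging $|G|=d^{2p}$ and the amplified gap $\epsilon'$ into \cref{thm:sample-complexity} then yields the claimed sample complexity $O(\log|G|/\epsilon')=O(p\log d/(d\epsilon))$, where each sample consists of one measurement of the character POVM on a copy of $\ket\psi^{\otimes D}$. The character POVM is implementable via a Clifford-circuit construction analogous to \cref{ssec:common-eigenbasis-of-weyls} once we regroup the $n$ sites into $n/p$ blocks of $p$ qudits and act on $D$ copies jointly; the joint eigenbasis of $\{(W_x^{\otimes(n/p)})^{\otimes D}\}_{x\in\mathbb Z_d^{2p}}$ is still a stabilizer basis because $d\mid(n/p)D$. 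The main conceptual step is the reduction itself—realizing that a globally $n$-dependent symmetry question collapses to a $p$-local StateHSP instance once the period is fixed—and once this is in place, the familiar $D$-th power trick that upgrades $\epsilon$ to $\Omega(d\epsilon)$ is precisely what converts the naive estimate $O(p\log d/\epsilon)$ into the stated $O(p\log d/(d\epsilon))$.
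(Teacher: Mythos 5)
Your proposal is correct and follows essentially the same route as the paper: blocking the ring into period-$p$ cells, taking $G=\mathbb{Z}_d^{2p}$ with the action $x\mapsto (W_x^{\otimes(n/p)})^{\otimes D}$ on $D$ copies, learning the phaseless version of $\mathcal M_p$ via the character POVM, and plugging $\log|G|=2p\log d$ together with the $D$-fold amplified gap into \cref{thm:sample-complexity}. The only caveat, shared with the paper's own treatment of the stabilizer case, is that $(1-\epsilon)^D\le 1-\Omega(D\epsilon)$ presumes $D\epsilon=O(1)$; otherwise the effective gap saturates at a constant, which does not affect the stated bound.
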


\section{Revisiting the hidden cut problem}
\label{sec:hidden-cut}
In this section, we are concerned with the following problem first studied in Ref.\ \cite{boulandStateHiddenSubgroup2025}:
\begin{defi}[Hidden many-cut problem]\label{defi:hidden-many-cut}
 Let $\ket{\psi}$ be a state vector on $n$ qubits. Suppose that $\ket{\psi}$
is a product of $m\geq1$ factor state vectors
\begin{equation} \label{eq:product-state}
\ket{\psi}=\ket{\phi_{1}}_{C_{1}}\otimes\cdots\otimes\ket{\phi_{m}}_{C_{m}}
\end{equation}
for some partition $C_{1}\sqcup\cdots\sqcup C_{m}=\left[n\right]$
such that each factor state vector $\ket{\phi_{k}}_{C_{k}}$ is at least
$\epsilon$-far in trace distance from any multipartite product state on $\left|C_{k}\right|$
qubits. The hidden many-cut problem asks to identify the set partition
$C_{1}\sqcup\cdots\sqcup C_{m}$, given copies of the state vector $\ket{\psi}$.
\end{defi}

As noted previously in Ref.\ \cite{boulandStateHiddenSubgroup2025}, this problem fits into the StateHSP framework by choosing $G =\mathbb{Z}_{2}^{n}$ and the representation
\begin{align}
    R: \mathbb{Z}_{2}^{n} &\to \mathrm{U}\left(((\mathbb{C}^2)^{\otimes n} )^{\otimes{2}} \right)\, , \\
    x &\mapsto R(x) = \bigotimes_{i=1}^n \mathrm{SWAP}^{x_i}\, .
\end{align}
We emphasize that $R$ is a representation on $\left((\mathbb{C}^2)^{\otimes n} \right)^{\otimes{2}}$, i.e., on two copies of the $n$-qubit Hilbert space.
The hidden subgroup $H$ is given by
\begin{equation}\label{eq:hidden-cut-subspace}
    H= \mathrm{span} \{1^n, 0^{C_1}1^{\overline{C_1}}, \dots, 0^{C_{m-1}}1^{\overline{C_{m-1}} } \}
\end{equation}
where $\overline{C_i}$ denotes the complement of $C_i$ in $[n]$, and $H \cong \mathbb{Z}_2^{m}$. It can be verified that if $\ket\psi$ is of the form given in \Cref{eq:product-state}, then $\ket{\psi}^{\otimes{2}} $ is invariant under $R(x)$ for all $x\in H$ as defined in \Cref{eq:hidden-cut-subspace}. 

\begin{prop}[c.f. Proposition 5 in Ref.\ \cite{boulandStateHiddenSubgroup2025}]
 \label{prop:non-cuts}  Let $\ket{\psi}$ be a state vector on $n$ qubits. Suppose that $\ket{\psi}$
is a product of $m\geq1$ factor state vectors
\begin{equation} 
\ket{\psi}=\ket{\phi_{1}}_{C_{1}}\otimes\cdots\otimes\ket{\phi_{m}}_{C_{m}}
\end{equation}
such that each factor state vector $\ket{\phi_{k}}_{C_{k}}$ is $\epsilon$-far in trace distance from any multipartite product state on $\left|C_{k}\right|$
qubits. Let $S\subseteq[n]$ be a subset
of qubits. Then, 
\begin{equation}
\tr \left(\psi_{S}^{2}\right)=\begin{cases}
1 & S= C_i \text{ for }i \in [m] \,,\\
\leq1-\epsilon^{2} & \mathrm{else} \,.
\end{cases}
\end{equation}
Here, $\tr \left(\psi_{S}^{2}\right)$ denotes the purity of the reduced state on $S$.
\end{prop}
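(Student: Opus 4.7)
The plan is to split on the cases in the statement and reduce the ``else'' case to a standard Schmidt-decomposition estimate on the single block of the partition that $S$ cuts through. When $S = C_i$ for some $i$ (and more generally, whenever $S$ is a union of some of the parts $C_j$), the product structure of $\ket{\psi}$ implies that $\psi_S$ factors as a tensor product of pure states $\ket{\phi_j}\bra{\phi_j}$ over those $j$ with $C_j \subseteq S$. Hence $\psi_S$ is itself pure and $\tr(\psi_S^2) = 1$, which settles the first branch.

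In the complementary case, where $S$ is not a union of parts, there must be at least one index $k^{*}$ for which $S \cap C_{k^{*}}$ is a proper nonempty subset of $C_{k^{*}}$, i.e., $S$ cuts through the block $C_{k^{*}}$. Because $\ket{\psi}$ factorizes across the partition, so does the reduced state: $\psi_S = \bigotimes_{j:\,C_j\cap S\neq\emptyset}\mathrm{tr}_{C_j\setminus S}(\ket{\phi_j}\bra{\phi_j})$, and therefore $\tr(\psi_S^2)$ is a product of purities, one for each $j$ with $C_j \cap S \neq \emptyset$. Each such factor is at most $1$, with equality iff $C_j \subseteq S$, so the remaining task is to bound the $k^{*}$-factor strictly below $1$.

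For this, I view $\ket{\phi_{k^{*}}}$ as a bipartite pure state across the nontrivial cut $(C_{k^{*}}\cap S,\,C_{k^{*}}\setminus S)$ and invoke the $\epsilon$-farness hypothesis applied to product states across this bipartition. Writing the Schmidt decomposition with coefficients $\lambda_1 \geq \lambda_2 \geq \cdots$, the maximal fidelity with any bipartite product state equals $\lambda_1$, so the trace distance to the closest such product state is $\sqrt{1-\lambda_1}$. The hypothesis then gives $\lambda_1 \leq 1-\epsilon^2$, and hence the purity of the $C_{k^{*}}\cap S$ reduction satisfies $\sum_i \lambda_i^2 \leq \lambda_1\sum_i \lambda_i = \lambda_1 \leq 1-\epsilon^2$. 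Multiplying by the remaining factors, each bounded by $1$, yields $\tr(\psi_S^2)\leq 1-\epsilon^2$.

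The most delicate point—and the place I would be most careful—is the interpretation of ``multipartite product state'' in the hypothesis of \Cref{defi:hidden-many-cut}: for the argument above to go through, $\epsilon$-farness must hold against \emph{every} nontrivial bipartite product state across any cut of $C_k$, not merely against fully factorized tensor-product states. This is the natural reading for the cut problem to be well-posed, since the weaker condition would not guarantee that $C_1\sqcup\cdots\sqcup C_m$ is the finest product decomposition of $\ket{\psi}$; I would make this convention explicit at the beginning of the proof. The remaining manipulations (factorization of $\psi_S$, the inequality $\sum_i\lambda_i^2\leq\lambda_1$, and the pure-state identity between fidelity and trace distance) are routine.
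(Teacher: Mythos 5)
Your argument is correct and complete. Note first that the paper itself does not prove \Cref{prop:non-cuts}; it defers to Proposition~5 of Ref.~\cite{boulandStateHiddenSubgroup2024}, so there is no in-paper proof to compare against. The chain you use---multiplicativity of purity across the tensor factorization of $\psi_S$, the identity $\max_{\alpha,\beta}|\braket{\alpha\otimes\beta}{\phi}|^2=\lambda_1$, the pure-state relation between overlap and trace distance, and $\sum_i\lambda_i^2\le\lambda_1\sum_i\lambda_i=\lambda_1$---is the standard route, and your worry about the meaning of ``multipartite product state'' is resolved the way you guess: farness from every state that factorizes across \emph{some} nontrivial bipartition of $C_k$ is the intended (and necessary) hypothesis, since any multipartite product is in particular such a bipartite product and vice versa. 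The only convention-dependent step is the trace-distance normalization: with $\|\cdot\|_1$ in place of $\tfrac12\|\cdot\|_1$ the bound degrades to $1-\epsilon^2/4$, which is immaterial for the application.

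One point you should make explicit rather than fix silently: the dichotomy you actually prove is ``$S$ is a union of some of the $C_j$'' versus ``$S$ cuts through some block,'' which is \emph{not} the dichotomy written in \Cref{prop:non-cuts}. As literally stated, the ``else'' branch fails whenever $m\ge2$: for $S=C_1\cup C_2$ (and trivially for $S=\emptyset$ or $S=[n]$) the reduced state is pure, so $\tr(\psi_S^2)=1$, not $\le 1-\epsilon^2$. Your corrected version is the right one---it is exactly what matches the hidden subgroup $H$ of \Cref{eq:hidden-cut-subspace}, whose elements are precisely the indicator vectors of unions of the parts $C_i$---so you have in effect repaired the statement; open the proof by saying so instead of quietly substituting the stronger case split.
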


Since $\mathbb{Z}_{2}^{n}$ is abelian, we can tackle this problem via Fourier sampling as outlined in \cref{sec:our-approach-to-abelian-state-hsp}. In particular, we will implement the character POVM corresponding to \Cref{eq:povm}. The characters are labeled by $y\in \mathbb{Z}_{2}^{n}$ and given by
\begin{equation}
    \chi_{y}(x) = (-1)^{y\cdot x}\,,
\end{equation}
for all $x\in \mathbb{Z}_{2}^{n}$ and the POVM is given by
\begin{equation}\label{eq:povm-swap}
    \Pi_y = \frac{1}{2^n} \sum_{x\in \mathbb{Z}_{2}^{n}} (-1)^{y\cdot x} \bigotimes_{i=1}^n \mathrm{SWAP}^{x_i}\,.
\end{equation}
Hence, the resulting output distribution $q_\psi(y) = \tr \big(\Pi_y\; (\ket \psi \bra \psi)^{\otimes 2} \big)$ is given by
\begin{equation}
    q_{\psi}(y)= \frac{1}{2^n} \sum_{x\in \mathbb{Z}_{2}^{n}}(-1)^{y\cdot x} \tr(\psi_x^2) \,. 
\end{equation}
Now, applying \Cref{thm:sample-complexity} and \Cref{prop:non-cuts}, we find that the hidden many-cut problem as stated in \cref{defi:hidden-many-cut} on $n$ qubits can be solved using $O(n/\epsilon^2)$ copies of the unknown state vector $\ket{\psi}$.

The projective measurement in \Cref{eq:povm-swap} can be implemented by measuring in the common eigenbasis of the $R(x) = \bigotimes_{i=1}^n \mathrm{SWAP}^{x_i}$ which is the multi-qubit Bell basis. Concretely, we divide the Bell state vectors into triplet and singlet as
\begin{align}
 \ket{0, 0} &=\frac{1}{\sqrt{2}} \left( |0,0\rangle + |1,1\rangle \right),  \\
 \ket{0, 1} &= \frac{1}{\sqrt{2}} \left( |0,0\rangle -|1,1\rangle \right) ,\\
 \ket{0, 2} &=  \frac{1}{\sqrt{2}} \left( |0,1\rangle + |1,0\rangle \right),\\
 \ket{1, 1} &= \frac{1}{\sqrt{2}} \left( |0,1\rangle - |1,0\rangle \right) .
\end{align}
To summarize, we arrive at the following theorem.

\begin{thm}[Improved hidden-cut algorithm]\label{thm:hidden-cut-thm}
There is an efficient non-adaptive quantum algorithm for the hidden
many-cut problem on $n$ qubits which uses $O\left(n/\epsilon^{2}\right)$
copies of the unknown state and runs in polynomial time, requiring circuits
of constant depths acting on two copies at a time using no 
additional auxiliary qubits.
\end{thm}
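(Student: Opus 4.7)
The plan is to reduce the theorem directly to the sample-complexity upper bound for the abelian StateHSP (\cref{thm:sample-complexity}). The reduction itself is already set up in the preceding discussion: we take $G=\mathbb{Z}_2^n$ with the representation $R(x)=\bigotimes_{i=1}^n\mathrm{SWAP}^{x_i}$ on two copies of $\ket\psi$ and hidden subgroup $H$ as in \cref{eq:hidden-cut-subspace}. The first step is to verify the StateHSP promises on $\ket\psi^{\otimes 2}$. Invariance for $x\in H$ follows from the product structure in \cref{eq:product-state}: a bit-string $x$ whose support is a union of the parts $C_i$ swaps entire factor states between the two copies and leaves $\ket\psi^{\otimes 2}$ unchanged. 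For $x\notin H$, the SWAP-test identity
\begin{equation}
\bra{\psi^{\otimes 2}}R(x)\ket{\psi^{\otimes 2}}=\tr(\psi_S^2)
\end{equation}
for $S=\mathrm{supp}(x)$, combined with \cref{prop:non-cuts}, yields $\tr(\psi_S^2)\le 1-\epsilon^2$. Hence the effective StateHSP gap in this reduction is $\epsilon^2$ rather than $\epsilon$.

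Plugging $|G|=2^n$ and effective gap $\epsilon^2$ into \cref{thm:sample-complexity} gives an algorithm that consumes $O(n/\epsilon^2)$ pairs of copies of $\ket\psi$. After gathering the $m=O(n/\epsilon^2)$ i.i.d.\ samples $\lambda_1,\dots,\lambda_m$ from $q_\psi$, the algorithm computes $H=\bigcap_{i=1}^m\ker(\chi_{\lambda_i})$ by Gaussian elimination over $\mathbb{F}_2$ in polynomial time, from which the partition $C_1\sqcup\cdots\sqcup C_m$ is read off via \cref{eq:hidden-cut-subspace}. Correctness relies on the anti-concentration statement in \cref{lem:anti-concentration}, which ensures that with high probability the random generators actually span $H^\perp$ rather than a proper subgroup.

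It remains to argue that the character POVM of \cref{eq:povm-swap} admits a non-adaptive, constant-depth implementation acting on two copies at a time with no auxiliary qubits. Because each $\mathrm{SWAP}^{x_i}$ acts only on the $i$-th qubit pair between the two copies, the mutually commuting family $\{R(x)\}_{x\in\mathbb{Z}_2^n}$ is diagonal in the tensor product of single-pair eigenbases of $\mathrm{SWAP}$; the common eigenbasis across all $n$ pairs is therefore the multi-qubit Bell basis listed in the excerpt. A per-pair Bell measurement---implemented by one CNOT, one Hadamard, and a computational-basis readout---is constant depth, uses no ancillas, and is non-adaptive. Coarse-graining the four Bell outcomes per pair into symmetric vs.\ anti-symmetric (triplet vs.\ singlet) produces exactly the character label $y_i\in\mathbb{Z}_2$.

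The only genuinely non-trivial step in this plan is the $\epsilon\to\epsilon^2$ passage via \cref{prop:non-cuts}, which fixes the characteristic quadratic dependence on $\epsilon$ and matches the rate of Ref.\ \cite{boulandStateHiddenSubgroup2024}; the resource improvements over that work---constant depth, no ancillas, two-copy locality, non-adaptivity---arise automatically from replacing the quantum-Fourier-sampling implementation of the character POVM by the direct joint-eigenbasis measurement, which in this representation is just a layer of Bell measurements.
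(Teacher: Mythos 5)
Your proposal is correct and follows essentially the same route as the paper: reduce to the abelian StateHSP with $G=\mathbb{Z}_2^n$ and $R(x)=\bigotimes_i \mathrm{SWAP}^{x_i}$ on two copies, use the purity bound of \cref{prop:non-cuts} to get the effective gap $\epsilon^2$, invoke \cref{thm:sample-complexity} for the $O(n/\epsilon^2)$ sample count, and realize the character POVM as transversal Bell measurements coarse-grained to triplet/singlet. You spell out a few steps the paper leaves implicit (the SWAP-test identity, Gaussian elimination, the explicit Bell-measurement circuit), but there is no substantive difference in approach.
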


As mentioned before, via the \emph{Choi Jamiołkowski isomorphism}, the 
many-cut problem for quantum states naturally translates into a 
\emph{many-cut problem for unitaries}. This translation allows us to apply our method by preparing the Choi state of the unitary channel.
In this way, given oracle access to a unitary, one can efficiently learn in what hidden way the unitary is a product.

\section{Translation invariance as StateHSP}
Here we restrict ourselves to qubits for simplicity and consider 
the cyclic group $C_n=\langle g\rangle$ where $g$ is some canonical generator such that $g^n=1$. We can define a representation of this group 
acting on $n$ qubits on a ring as
\begin{equation}\label{eq:T}
T\ket{x_1,x_2,\ldots,x_n}=\ket{x_n,x_1,\ldots,x_{n-1}}
\end{equation}
where $T=T(g)$ is the canonical 
translation operator corresponding to $g$ and therefore $T(g^k)=T^k$. Equivalently, we can identify the cyclic group with $\ZZ_n=\{0,1,\ldots,n-1\}$ with addition modulo $n$.

\begin{defi}[Hidden translation problem]
    Let $\ket\psi\in\CC^n$ be a state vector of $n$ qubits. We assume there is a subgroup $H$ of $\ZZ_n$ such that
    \begin{equation}
        T^k\ket\psi=\ket\psi,\quad \forall k\in H
    \end{equation}
    and also that $|\bra\psi T^k\ket\psi|<1-\ep$ whenever $k\notin H$. The hidden translation problem asks to identify the subgroup $H$.
\end{defi}

\noindent This problem fits into the StateHSP framework by choosing $G=\ZZ_n$ and the representation
\begin{align}
    R:\ZZ_n &\to\mathrm{U}\left(\CC^n\right) \,, \\
    k &\mapsto R(k) = T^k \,,
\end{align}
where $T$ is the operator defined in \cref{eq:T}. Since $\ZZ_n$ 
is abelian, we can tackle this problem via our approach to the abelian StateHSP outlined in \cref{sec:our-approach-to-abelian-state-hsp}. In particular, we will implement the character POVM corresponding to \Cref{eq:povm}. The irreducible characters of $\ZZ_n$ can be labeled by $j\in\ZZ_n$ and are given by
\begin{equation}
    \chi_j(k)=\om^{jk}\,,
\end{equation} for all $k\in\ZZ_n$, where $\om=\e^{2\pi i/n}$ is a primitive $n$-th root of unity.
Therefore, in this case the POVM corresponding to \cref{eq:povm} reads
\begin{equation}\label{eq:povm-translation}
    \Pi_j = \frac{1}{n}\sum_{k\in\ZZ_n} \om^{-jk} T^k \,.
\end{equation}
Hence, the resulting output distribution $q_\psi(y) = \tr (\Pi_j\; \ket \psi \bra \psi )$ is given by
\begin{equation}
    q_{\psi}(j)= \frac{1}{n}\sum_{k\in\ZZ_n}\om^{-jk}\bra\psi T^k\ket\psi\,.
\end{equation}

Note that all subgroups of $\ZZ_n$ can be labelled by their period since all those subgroups are of the form $\{k\in\ZZ_n:k=0\mod r\}$
with $r|n$ (denoting that $r$ is a divisor of $n$). So, the hidden subgroup $H$ is of the form
\begin{equation}
    H_r=\{0,r,2r,\ldots,(n/r-1)r\}
\end{equation}
with $|H_r|=n/r$ for every divisor $r$.

In principle, the projective measurement $\Pi_j$ 
from \Cref{eq:povm-translation} can be implemented without 
any auxiliary 
qubits by measuring in the common eigenbasis of the $T^k$. 
However, 
from a practical point of view, it is not clear how to 
implement this eigenbasis measurement via a quantum circuit. Instead, 
for a concrete implementation of $\Pi_j$, one can employ the 
generalized phase estimation approach laid out in \cref{ssec:quantum-fourier-sampling}. Concretely, we will need an auxiliary register 
with $m = \left \lceil{\log_2(n)}\right \rceil$ many qubits 
which 
will serve as the control register for implementing the controlled group action $\sum_{k=0}^{m-1} \ket{k}\bra{k}\otimes T^k$. 
Assuming all-to-all connectivity, for all 
$k\in \{0,\dots, n-1 \}$, $T^k$ can be implemented via 
c-SWAPS in depth $O(n)$. Hence, to implement the whole 
controlled group action, via concatenating c-$T^{2^j}$ layers for $j=0,\dots, m-1$, we require a circuit of depth $O(n \log n)$. This circuit depth dominates the 
cost of subsequently implementing the QFT on the 
auxiliary register of size $O(\log n)$ qubits.
Hence, by virtue of our general 
result \cref{thm:sample-complexity}, we arrive at the following theorem.

\begin{thm}[Efficient algorithm for the hidden translation problem]\label{thm:hidden-translation}
There is an efficient non-adaptive quantum algorithm for the hidden
translation problem on $n$ qubits which uses $O\left((\log n)/\epsilon \right)$
copies of the unknown state and runs in polynomial time, 
requiring circuits
of depths $O(n \log n)$ acting coherently on only a 
single copy at a time using $O(\log n)$ 
additional auxiliary systems. 
\end{thm}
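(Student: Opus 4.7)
The plan is to directly invoke the general framework developed in \Cref{sec:our-approach-to-abelian-state-hsp}. Since $\ZZ_n$ is abelian with $|G|=n$, an application of \Cref{thm:sample-complexity} with $H=H_r$ already yields the claimed sample complexity $O(\log n/\epsilon)$ for recovering the hidden subgroup from i.i.d.\ samples of the character POVM $\Pi_j$ defined in \cref{eq:povm-translation}. What remains is therefore (i) to exhibit an explicit implementation of $\Pi_j$ meeting the stated resource constraints, and (ii) to describe the classical post-processing that outputs the divisor $r$ such that $H=H_r$.

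For (i), I would use the quantum Fourier sampling recipe from \cref{ssec:quantum-fourier-sampling}. Take an auxiliary register of $m=\lceil\log_2 n\rceil$ qubits initialized to $\ket 0$, prepare a uniform superposition over $\ZZ_n$ on it, and apply the controlled translation $\sum_{k=0}^{n-1}\ket k\bra k\otimes T^k$ followed by an inverse QFT over $\ZZ_n$ and a computational basis measurement. The central subtask is decomposing the controlled group action as a product of $m$ controlled powers $c\text{-}T^{2^j}$ for $j=0,\dots,m-1$. Each $T^{2^j}$ is a cyclic shift by $2^j$ positions on the ring of $n$ qubits; under all-to-all connectivity it decomposes into a network of $O(n)$ c-SWAPs of depth $O(n)$. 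Concatenating the $O(\log n)$ such layers yields total depth $O(n\log n)$, which dominates the $O(\log^2 n)$ depth of the inverse QFT on the $O(\log n)$-qubit auxiliary register. Only one copy of $\ket\psi$ enters each round, so coherence is only ever required over a single copy.

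For (ii), after collecting $M=O(\log n/\epsilon)$ samples $j_1,\dots,j_M\in\ZZ_n$, I would recover $H$ by computing $r=\gcd(j_1,\dots,j_M,n)$; since every subgroup of $\ZZ_n$ is cyclic of the form $H_r=\langle r\rangle$ for some divisor $r\mid n$, and since the annihilator $H_r^\perp$ is generated in $\wZZ_n\cong\ZZ_n$ by $n/r$, identifying the generator of the group spanned by the samples is equivalent to computing this gcd. By \Cref{thm:sample-complexity}, with probability $\geq 1-\delta$ the sampled characters generate $H^\perp$ and hence $r$ is correctly recovered. Each gcd computation is polynomial in $\log n$, so the total classical time is polynomial.

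The main obstacle is the circuit-depth analysis of the controlled translation: a naive approach might implement the full $T^k$ for each control pattern, leading to $\Omega(n^2)$ depth, and one has to recognize that exploiting the abelian structure of the exponent through binary expansion and the c-$T^{2^j}$ decomposition is what brings the cost down to $O(n\log n)$. A secondary subtlety is that when $n$ is not a power of two, preparing the uniform superposition on $\ZZ_n$ and implementing the QFT over $\ZZ_n$ rather than $\ZZ_{2^m}$ must be done carefully, but both admit standard polynomial-time circuits on $O(\log n)$ qubits and do not affect the leading-order complexity.
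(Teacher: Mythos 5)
Your proposal matches the paper's proof essentially step for step: the sample complexity is obtained by applying \Cref{thm:sample-complexity} to $G=\ZZ_n$, and the POVM $\Pi_j$ is implemented via the quantum Fourier sampling routine of \cref{ssec:quantum-fourier-sampling} with the controlled translation decomposed into $\lceil\log_2 n\rceil$ layers of controlled $T^{2^j}$, each of depth $O(n)$ in c-SWAPs, giving total depth $O(n\log n)$ that dominates the QFT on the $O(\log n)$-qubit auxiliary register. The only nit is in your post-processing: since the samples generate $H_r^\perp=\langle n/r\rangle$, the gcd of $j_1,\dots,j_M$ and $n$ equals $n/r$, so the period is $r=n/\gcd(j_1,\dots,j_M,n)$ rather than the gcd itself.
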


\section{Conclusions and outlook}

In this work, we have identified a unified, efficient approach for solving the abelian state hidden subgroup problem that 
not only subsumes known quantum‐learning subroutines (such as Bell difference sampling for qubit stabilizer learning) but also extends naturally to new settings---including qudit stabilizer groups, the hidden many‐cut problem, and translational symmetries on lattice systems. This approach is conceptually streamlined and features significantly reduced circuit complexity compared to prior work, making it even more feasible for practical implementation. It also constitutes a new bridge between sampling problems with the mere use of showing a quantum advantage \cite{SupremacyReview} with proper quantum algorithms with practical applications. As such, it may bring quantum algorithms featuring some utility and showing a quantum advantage closer to feasibility \cite{MindTheGaps,VastWorld}.

Looking ahead, one natural direction is to generalize the StateHSP framework to continuous symmetry groups, thereby unlocking further applications across the board. It would also be interesting to identify further applications in the quantum many-body context. More broadly, we hope this work invites further studies that bring together ideas of quantum algorithms with those of quantum learning theory.
In fact, this intersection seems to be a particularly fruitful avenue for future research.

 \section*{Acknowledgements}

We would like to thank Marios Ioannou for helpful discussions. This project has been supported by the German Federal Ministry for Research, Technology, and Space (MUNIQC-Atoms, QSolid, QuSol, Hybrid++), Berlin Quantum, the Munich Quantum Valley (K-8), the QuantERA (HQCC), the Clusters of Excellence (MATH+, ML4Q), the Quantum Flagship (Millenion, PasQuans2), 
the DFG (CRC 183, SPP 2541), and the European Research Council (DebuQC).

\bibliographystyle{alphaurl}

\bibliography{refs}

\end{document}